\newcommand{\ILPRobust}{MinPathError\xspace}
\newcommand{\ILPLQ}{LeastSquares\xspace}
\newcommand{\ext}[1]{\mathsf{extension}(#1)}
\title{Safe Sequences via Dominators in DAGs\\ for Path-Covering Problems}
\titlerunning{Safe Sequences via Dominators for Path Covers}
\author{Francisco Sena\inst{1}\orcidID{0000-0002-3508-4473} \and
Romeo Rizzi\inst{2}\orcidID{0000-0002-2387-0952} \and
Alexandru I. Tomescu\inst{1}\orcidID{0000-0002-5747-8350}}
\authorrunning{F. Sena, R. Rizzi, and A.I. Tomescu}
\institute{Department of Computer Science, University of Helsinki, Finland \\
\email{\{francisco.sena, alexandru.tomescu\}@helsinki.fi} 
\and
Department of Computer Science, University of Verona, Italy \\
\email{romeo.rizzi@univr.it}}
\begin{document}

\maketitle

\begin{abstract}
A \emph{path-covering} problem on a directed acyclic graph (DAG) requires finding a set of source-to-sink paths that cover all the nodes, all the arcs, or subsets thereof, and additionally they are optimal with respect to some function. In this paper we study \emph{safe sequences} of nodes or arcs, namely sequences that appear in some path of every path cover of a DAG.
We show that safe sequences admit a simple characterization via cutnodes. Moreover, we establish a connection between maximal safe sequences and leaf-to-root paths in the source- and sink-dominator trees of the DAG, which may be of independent interest in the extensive literature on dominators. With dominator trees, safe sequences admit an $O(n)$-size representation and a linear-time output-sensitive enumeration algorithm running in time $O(m + o)$, where $n$ and $m$ are the number of nodes and arcs, respectively, and $o$ is the total length of the maximal safe sequences.
We then apply maximal safe sequences to simplify Integer Linear Programs (ILPs) for two path-covering problems, \ILPLQ and \ILPRobust, which are at the core of RNA transcript assembly problems from bioinformatics. On various datasets, maximal safe sequences can be computed in under 0.1 seconds per graph, on average, and ILP solvers whose search space is reduced in this manner exhibit significant speed-ups. For example on graphs with a large width, average speed-ups are in the range $50$--$250\times$ for \ILPRobust and in the range $80$--$350\times$ for \ILPLQ. Optimizing ILPs using safe sequences can thus become a fast building block of practical RNA transcript assembly tools, and more generally, of path-covering problems.
\keywords{directed acyclic graph \and path cover \and dominator tree \and integer linear programming \and least squares \and minimum path error}
\end{abstract}

\newpage

\section{Introduction}
\label{sec:introduction}

Data reduction, namely simplifying the input to a problem while maintaining equivalent solutions, has proved extremely successful in solving hard problems in both theory and practice~\cite{DBLP:journals/sigact/GuoN07}. While the most well-known approach for data reduction is \textit{kernelization}~\cite{DBLP:conf/birthday/LokshtanovMS12,DBLP:journals/corr/abs-1811-09429}, there are also other ways to pre-process the input in order to obtain practical algorithms. 

For example, for graph problems where a solution is a set of vertices with certain properties, Bumpus et al.~\cite{essential_vertices_1} and Jansen and Verhaegh~\cite{essential_vertices_2} considered the notion of \textit{$c$-essential vertex} as one belonging to \textit{all} $c$-approximate solutions to the problem.
They showed that for some vertex-subset finding problems (e.g.~Vertex Cover) and some values of $c$, there exist algorithms with an exponential dependency only in the number of \textit{non-$c$-essential vertices}.

In this paper we pursue a similar data reduction approach, via the notion of \textit{safe partial solution}~\cite{omnitigs_tomescu}, defined as one ``appearing'' in all solutions to a problem. For solutions that are sets of paths satisfying certain properties, a \textit{safe path} can be defined as one appearing as a subpath of some path of any solution. Intuitively, safety corresponds to $1$-essentiality.

For an NP-hard problem, it is usually intractable also to compute all its safe partial solutions (see e.g.~\cite{dias2023safety}). As such, in this paper we use the following property about safety. Let $\mathcal{S}$ be the (unknown) set of solutions to a problem, and let $\mathcal{S'} \supseteq \mathcal{S}$ be a superset of $\mathcal{S}$ whose safe partial solutions can be computed efficiently. Then, if a partial solution appears in all solutions in $\mathcal{S'}$, it also appears in all solutions in $\mathcal{S}$. 

This property was used by Grigorjew et al.~\cite{acceleratingILP} when applying safety as a pre-processing step for the Minimum Flow Decomposition (MFD) problem. The MFD problem asks to decompose the flow in a directed acyclic graph (DAG) into a minimum number of weighted source-to-sink paths~\cite{ahuja1993network,vatinlen2008simple}. As solution superset $\mathcal{S}'$, they considered flow decompositions with any number of paths, whose safe paths can be computed in polynomial time~\cite{safe_paths_fd,khan_optimizing,ahmed2024evaluating}. Instead of simplifying the input graph, they used the safe paths to simplify an integer linear program (ILP) for the MFD problem from~\cite{dias2022fast}, thus reducing the search space of the ILP solver. This resulted in speed-ups of up to 200 times on the hardest instances. 

In this paper we focus on problems whose input consists of an \emph{$s$-$t$ DAG}\footnote{By an \emph{$s$-$t$ directed graph} we mean one with a unique source $s$ and unique sink $t$ such that all nodes are reachable from $s$ and all nodes reach $t$; if the graph is acyclic, then it is an \emph{$s$-$t$ DAG}. Path-covering problems on DAGs typically require that paths either start in a source of the graph, or more generally, in a given set of starting nodes, and analogously end in a sink, or in a given set of terminal nodes. However, these start/end nodes can naturally be connected to $s$ and $t$, and thus we can assume, without loss of generality, that our DAGs are $s$-$t$ DAGs.} $G$ where the solution superset $\mathcal{S}'$ contains all the sets of paths in $G$ from $s$ to $t$ that together cover all the nodes or arcs of $G$ (or given subsets thereof). We will refer to these objects simply as \emph{path covers}. Note that a flow decomposition is also a path cover of the arcs carrying flow, but path covers generalize many other problems, especially those modeling practical problems from bioinformatics, see \Cref{sec:scaling-ILPs-contrib}.
Next, we present the contributions of the paper and further contextualization.

\subsection{Optimal enumeration of maximal safe sequences via dominators}

\paragraph*{Safe sequences as a generalization of safe paths.} The notion of a safe \textit{path} has been extensively used to capture \textit{contiguous} safe partial solutions (see e.g.~\cite{omnitigs_tomescu,schmidt2023omnitig,dias2023safety,obscura2018safe,rahman2022assembler,khan_optimizing,genome_assembly_from_practice_to_theory}). However, for our purposes contiguity is not necessary, but only the fact that the nodes/arcs in a safe partial solution appear in the same order in some path of every solution. For example, it may be that the only way to reach a node $u$ is via a node $v$, with some complex subgraph between them. The sequence $u,v$ thus appears in some path of every path cover of the nodes.

We thus define a \emph{safe sequence} with respect to the path covers of a DAG $G$, either of the nodes or of the arcs, analogously to that of a safe path (see \Cref{fig:seq-example} for an example).
This notion is inspired by the work of Ma, Zheng and Kingsford~\cite{and_or_quant}, who proposed a problem (AND-Quant) and a max-flow-based algorithm deciding when a sequence of arcs appears in some path in every flow decomposition (with any number of paths) of a flow in a DAG. Here, we are particularly interested in characterizing but also in computing \emph{maximal} safe sequences, i.e., safe sequences that are not a proper subsequence of another safe sequence.

\begin{figure}[t]
    \centering
    \includegraphics[width=1\linewidth]{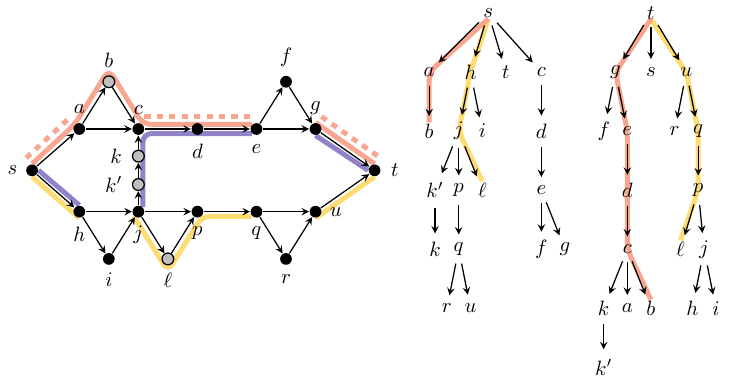}
    \caption{An $s$-$t$ DAG and its $s$- and $t$-dominator trees. Three maximal safe sequences with respect to path covers of the nodes are shown in solid orange, violet, and yellow. Nodes $b$ and $\ell$ are leaves in both dominator trees, and thus their extension into the orange and yellow sequences, resp., are maximally safe. (The maximal safe sequences obtained by extending nodes $f$, $i$ and $r$ are not shown.) Note that $k'k$ is a unitary path and that $\ext{k'}=\ext{k}$. The dashed orange safe sequence is not maximal and can be obtained by extending $a$.
    }
    \label{fig:seq-example}
\end{figure}

\paragraph*{Characterization of safe sequences.} Assume that the input DAG has $n$ nodes and $m$ arcs. We show that safe sequences admit a simple characterization in terms of cutnodes. A \textit{$u$-$v$ cutnode} is a node such that all paths from $u$ to $v$ (\textit{$u$-$v$ paths}) pass through it.
We present our results for path covers of the nodes and refer to~\Cref{sec:arc-path-covers} for the results on path covers of the arcs. Omitted proofs can be found in~\Cref{sec:additional-proofs}.

\begin{theorem}[Characterization of safe sequences]
\label{thm:safe_sequences_char}
    Let $G=(N,A)$ be an $s$-$t$ DAG and $X$ be a sequence of nodes. The following statements are equivalent:
    \begin{enumerate}
        \item $X$ is safe for path covers.
        \item There is a node $u \in N$ such that every $s$-$t$ path covering $u$ contains $X$.
        \item There is a node $v \in N$ such that $X$ is contained in the sequence obtained as the concatenation of the sequences of $s$-$v$ and $v$-$t$ cutnodes.
    \end{enumerate}
\end{theorem}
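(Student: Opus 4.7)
The plan is to prove the equivalence as a cyclic chain $(1) \Rightarrow (2) \Rightarrow (3) \Rightarrow (1)$, with the middle statement serving as a bridge between the abstract safety notion and the structural characterization via cutnodes. The implication $(1) \Rightarrow (2)$ I would argue by contrapositive: if for every node $u \in N$ there exists an $s$-$t$ path $P_u$ through $u$ that does not contain $X$ as a subsequence, then the collection $\{P_u : u \in N\}$ is itself a node-covering set of $s$-$t$ paths (each $u$ is covered by $P_u$), hence a path cover, in which no path contains $X$; so $X$ is not safe. The implication $(3) \Rightarrow (1)$ is almost direct: given any node-path-cover $\mathcal{P}$, some $P \in \mathcal{P}$ covers $v$, and its $s$-$v$ prefix must traverse every $s$-$v$ cutnode in order while its $v$-$t$ suffix traverses every $v$-$t$ cutnode in order; hence $P$ contains the full concatenation as a subsequence, and in particular contains $X$.

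The core of the argument is $(2) \Rightarrow (3)$, where I would set $v := u$ and show that $X$ is a subsequence of the concatenation $C_s \cdot C_u$ of the $s$-$u$ and $u$-$t$ cutnode sequences (ordered by reachability). First, I would show each $x \in X$ belongs to $C_s \cup C_u$. Since some $s$-$t$ path through $u$ contains $x$ (by the existence of any such path and assumption $(2)$), acyclicity forces either $x$ reaches $u$ or $u$ reaches $x$, exclusively. In the first case, if $x$ were not an $s$-$u$ cutnode, an $s$-$u$ path $P_1$ avoiding $x$ would exist; extending $P_1$ by any $u$-$t$ path (which cannot pass through $x$, as $u$ does not reach $x$ in a DAG) yields an $s$-$t$ path through $u$ not containing $x$, contradicting $(2)$. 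The second case is symmetric, giving $x \in C_u$. Second, both $C_s \cdot C_u$ and $X$ are chains in the reachability order ($X$ because it is a subsequence of an $s$-$t$ path), so the elements of $X$ must appear within $C_s \cdot C_u$ in their common reachability order, yielding the required subsequence containment.

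The main obstacle I anticipate is the cutnode-membership step above: one has to exploit acyclicity carefully to conclude that each $x \in X$ becomes a cutnode on exactly one of the two halves, rather than being avoidable by some combined detour that switches sides. Once membership in $C_s \cup C_u$ is established, the ordering part reduces to the observation that reachability on an $s$-$t$ DAG is a partial order compatible with path traversal, and the two outer implications are essentially bookkeeping on path-cover definitions.
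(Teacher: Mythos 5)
Your proposal is correct and follows essentially the same route as the paper: the same cyclic chain $(1)\Rightarrow(2)\Rightarrow(3)\Rightarrow(1)$, with the contrapositive construction of a covering family $\{P_u\}$ for the first implication and the cutnode-traversal argument for the last. Your $(2)\Rightarrow(3)$ step is in fact more detailed than the paper's one-line justification, correctly spelling out why each node of $X$ must itself be an $s$-$u$ or $u$-$t$ cutnode via the acyclicity/detour argument.
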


While outputting the sequences of $s$ and $t$ cutnodes for every node $v$ in the graph can be done in time $O(m)$ per node, different nodes can thus be extended to the same maximal safe sequence, and some extensions may not even be maximal (see \Cref{fig:seq-example}). The challenge is to characterize those nodes identifying all and only maximal safe sequences, without duplicates.

\paragraph*{Linear-time output-sensitive enumeration via dominators.} We show that we can identify such nodes in an elegant manner using \emph{dominators}, which is a well-studied concept related to cutnodes (see e.g.~the survey \cite{parotsidis2013dominators} for an in-depth contextualization of dominators). In particular, the \emph{$s$-dominator tree} of an $s$-$t$ DAG $(N,A)$ has node set $N$, and $u$ is the parent of $v$ iff all paths from $s$ to $v$ pass through $u$, and moreover there is no node reachable from $u$ (except $u$ itself) whose removal makes $v$ unreachable from $u$.

The key idea behind our results is to work \emph{also} with the $t$-dominator tree of $G$, which is defined in an analogous manner, but considering reachabilities \emph{to} $t$. By analyzing the interplay between the two trees, we show that the nodes solving this problem (\Cref{thm:cores-nodes}) are those that are \emph{leaves in both} the $s$- and $t$-dominator trees, only after overcoming some technicalities arising from unitary paths (see~\Cref{fig:seq-example},~\Cref{lem:compressed-graph} and~\Cref{lemma:no_redundancy_nodes}). Since dominator tree leaves admit a characterization in terms of in- or out-neighborhoods on DAGs (\Cref{lemma:node_dominator}), this leads to an $O(mn)$-time algorithm outputting all and only maximal safe sequences without duplicates, independently of dominator trees: 

\begin{restatable}{theorem}{mnallonlymaximal}
\label{thm:mn-all-only-maximal}
    Given an $s$-$t$ DAG $G$ with $n$ nodes and $m$ arcs, we can compute all and only the maximal safe sequences of $G$ in $O(nm)$-time, without constructing dominator trees.
\end{restatable}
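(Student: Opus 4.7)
The plan is to identify, without ever materializing a dominator tree, a small set $V^{\star} \subseteq N$ of ``core nodes'' such that $\{\ext{v} : v \in V^{\star}\}$ is exactly the set of maximal safe sequences, with no duplicates, and then to compute each $\ext{v}$ in $O(m)$ time. By Theorem~\ref{thm:safe_sequences_char}, every maximal safe sequence is of the form $\ext{v}$ for some node $v$, so it suffices to find the right $v$'s and to produce their extensions.

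For the first step, the structural characterization promised by Theorem~\ref{thm:cores-nodes} (combined with Lemma~\ref{lem:compressed-graph} and Lemma~\ref{lemma:no_redundancy_nodes}) says that, modulo the unitary-path compression, the nodes whose extensions are exactly the maximal safe sequences are those that are simultaneously leaves of the $s$-dominator tree and of the $t$-dominator tree. Crucially, Lemma~\ref{lemma:node_dominator} characterizes such leaves purely in terms of in-neighborhoods and out-neighborhoods in $G$. Hence I can compute $V^{\star}$ by a single linear scan over the adjacency lists, in $O(n+m)$ time, using only local neighborhood tests and a preprocessing pass that collapses unitary paths to handle the technicalities described after Theorem~\ref{thm:safe_sequences_char}. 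No dominator tree is ever constructed.

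For the second step, I iterate over each $v \in V^{\star}$ and build $\ext{v}$ as the concatenation of the $s$-$v$ cutnode chain and the $v$-$t$ cutnode chain. The $s$-$v$ cutnodes are precisely the $s$-dominators of $v$, which form a totally ordered chain; I can compute them directly in the subgraph $H_v$ induced by the nodes reachable from $s$ that also reach $v$ (extracted by two BFS/DFS passes in $O(m)$ time). Since $H_v$ is an $s$-$v$ DAG, a single topological scan that tracks the ``frontier'' of partial $s$-$v$ paths identifies a node as a cutnode exactly when the frontier collapses onto it, yielding the whole chain in $O(m)$ time. The $v$-$t$ cutnodes are obtained symmetrically in the reverse graph. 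Thus each extension costs $O(m)$, and since $|V^{\star}| \leq n$, the total cost is $O(nm)$. Completeness and maximality of the output follow from Theorem~\ref{thm:cores-nodes}, and absence of duplicates from Lemma~\ref{lemma:no_redundancy_nodes}.

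The main obstacle I foresee is the single-target cutnode subroutine: one must justify that the $s$-$v$ dominator chain can be produced in $O(m)$ without invoking full dominator-tree machinery. Restricting to $H_v$ and using that its $s$-$v$ cutnodes form a chain along any topological order makes this tractable via the frontier argument sketched above, but the correctness bookkeeping (in particular, that the unitary-path preprocessing does not spuriously merge distinct extensions and that every node flagged as a cutnode truly belongs to the chain) is where the care lies. The rest of the argument is essentially assembling already-established pieces: Theorem~\ref{thm:safe_sequences_char} guarantees that nothing outside the extensions $\ext{v}$ is safe, while Theorem~\ref{thm:cores-nodes} and Lemma~\ref{lemma:no_redundancy_nodes} guarantee that restricting to $v \in V^{\star}$ captures exactly the maximal safe sequences, once and only once.
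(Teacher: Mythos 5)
Your proposal is correct and follows essentially the same route as the paper's proof: compress the graph, identify the nodes that are leaves in both dominator trees purely via the local in-/out-neighborhood test of Lemma~\ref{lemma:node_dominator}, and output $\ext{v}$ for each such node in $O(m)$ time, invoking Theorem~\ref{thm:cores-nodes} for completeness and maximality. The only difference is that the paper obtains the per-node $O(m)$ cutnode computation by citing the algorithm of Cairo et al.~\cite{CAIRO2021103} rather than re-deriving a frontier-based subroutine, and it justifies the absence of duplicates more directly (a common leaf cannot appear in the extension of a different common leaf, since that would make it a dominator) rather than via Lemma~\ref{lemma:no_redundancy_nodes}.
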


To improve this time bound we resort to the fact that dominator trees admit an $O(m+n)$ time construction \cite{alstrup1999dominators,buchsbaum2008linear,fraczak2013finding}, although so far some of these algorithms do not seem to be practical. See~\cite{georgiadis2006finding,georgiadis2014loop,cooper2001simple} for experimental studies on different dominator tree algorithms. Regardless, any dominator algorithm serves our purposes and its effect on the runtime of our algorithm is clear in the proof of the theorem below.

\begin{restatable}[Enumeration of maximal safe sequences]{theorem}{maxsafeseqsenumrepr}
\label{thm:representation-optimal-enumeration}
    Let $G$ be an $s$-$t$ DAG with $n$ nodes and $m$ arcs. The following hold:
    \begin{enumerate}
        \item There is an $O(m+o)$ time algorithm outputting the set of all maximal safe sequences with no duplicates, where $o$ denotes the total length of all the maximal safe sequences.
        \item There is an $O(n)$-size representation of all the maximal safe sequences of $G$ that can be built in $O(m+n)$ time.
    \end{enumerate}
\end{restatable}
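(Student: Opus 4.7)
The plan is to combine the characterization of maximal safe sequences as extensions of ``bi-leaves'' (Theorem~\ref{thm:cores-nodes}) with any linear-time dominator algorithm. First, apply the unitary-path compression from Lemma~\ref{lem:compressed-graph} so that the bi-leaf description is clean, and then construct both the $s$- and $t$-dominator trees in $O(m+n)$ time using any of~\cite{alstrup1999dominators,buchsbaum2008linear,fraczak2013finding}. Since an $s$-$t$ DAG satisfies $m \geq n-1$, this is $O(m)$.

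For the $O(n)$-size representation (part~2), store the two dominator trees (each of size $O(n)$) together with one bit per node marking whether it is a bi-leaf. By Theorem~\ref{thm:cores-nodes} this fully determines the set of maximal safe sequences: for each bi-leaf $v$, its extension $\ext{v}$ is exactly the concatenation of the root-to-$v$ path in the $s$-dominator tree (the sequence of $s$-$v$ cutnodes, by Lemma~\ref{lemma:node_dominator}) with the $v$-to-root path in the $t$-dominator tree (the $v$-$t$ cutnodes), with $v$ appearing once at the seam. The whole structure is built in $O(m+n) = O(m)$ time.

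For the $O(m+o)$ enumeration (part~1), build the representation above, then sweep over the $n$ nodes to find the bi-leaves. For each bi-leaf $v$, emit $\ext{v}$ by walking from $v$ up the $s$-dominator tree, then continuing from $v$ upward in the $t$-dominator tree (skipping the repeat of $v$). Each step outputs one element of $\ext{v}$, so the cost for $v$ is $O(|\ext{v}|)$ and the total emission cost is $O(o)$. By Theorem~\ref{thm:cores-nodes} (via Lemma~\ref{lemma:no_redundancy_nodes}) distinct bi-leaves yield distinct maximal safe sequences and every maximal safe sequence arises this way, so no duplicates are produced and none are missed. Adding the $O(m)$ construction cost gives the claimed $O(m+o)$ bound.

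The main obstacle is purely technical: one must ensure that the bi-leaf characterization remains correct after compressing unitary paths, and that the interleaving at $v$ between the two tree walks produces exactly the sequence promised by Theorem~\ref{thm:safe_sequences_char}(3) without duplicating $v$ or dropping an incident cutnode. Both points are handled by Lemmas~\ref{lem:compressed-graph} and~\ref{lemma:no_redundancy_nodes}, so with those in hand the algorithm reduces to a standard linear dominator construction followed by a pair of root-ward tree traversals per bi-leaf.
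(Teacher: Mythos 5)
Your proposal is correct and follows essentially the same route as the paper's proof: compress the graph, build both dominator trees in linear time, identify the nodes that are leaves in both trees, and output/represent their extensions as root-ward paths in the two trees, with correctness and duplicate-freeness coming from Theorem~\ref{thm:cores-nodes}. The only nitpick is that the identification of root-to-$v$ tree paths with cutnode sequences is Remark~\ref{remark:extension-and-dominators} rather than Lemma~\ref{lemma:node_dominator}, but this does not affect the argument.
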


An analogous result to \Cref{thm:representation-optimal-enumeration} can be obtained for path covers where only a given subset $C \subseteq N$ of nodes has to be covered. To achieve that, we have to analyze the dominator trees induced by the dominance relation restricted to $C$. Moreover, analogous results can be obtained also for path covers of the arcs. The details can be found in~\Cref{sec:subset-path-covers}.

\subsection{Scaling path-covering ILPs for bioinformatics problems}
\label{sec:scaling-ILPs-contrib}

The motivation of our work comes from bioinformatics, where many NP-hard graph problems related to sequencing data require finding a set of paths in a DAG satisfying certain properties. For example, in the \textit{RNA transcript assembly} problem, we need to recover the RNA transcripts produced in various abundances by a gene, given only smaller fragments (i.e.~RNA-seq reads) sequenced from the RNA transcripts. This problem is usually solved by first building an arc-weighted DAG from the fragments, i.e., a \emph{splice graph}. The RNA transcripts correspond to a set of source-to-sink weighted paths in the DAG that ``best explain'' the arc-weights, under various definitions of optimality, see e.g.~\cite{translig,cliiq,multitrans,ssp,nsmap,jumper,class2,isoinfer,dias2022fast,dias2023safety}. 

Motivated by this, we apply safe sequences in two path-covering problems defined on DAGs that are at the core of practical tools for RNA transcript assembly. We explain these in \Cref{sec:encode-safety}, and briefly describe them below. 

The first problem, \ILPLQ, is used by several RNA assembly tools e.g.~IsoLasso~\cite{isolasso}, CIDANE~\cite{cidane}, Ryuto~\cite{ryuto}, SLIDE~\cite{slide}, iReckon~\cite{ireckon}, TRAPH~\cite{traph}.

Here, we need to find a set of $k$ paths minimizing the sum of the squared errors between the weight of each arc and the weight of the solutions paths going through the arc. Formally, given an $s$-$t$ DAG $G = (N,A)$, a positive weight $w(uv)$ for every arc $uv \in A$, and $k \geq 1$, we need to find $s$-$t$ paths $P_1,\dots,P_k$, with associated weights $w_1,\dots,w_k$, minimizing 
\[\sum_{uv \in A} \left(w(uv) - \sum_{i ~:~ uv \in P_i}w_i\right)^2.\]
While solution paths $P_1,\dots,P_k$ many not cover all arcs, we can heuristically infer that they cover the arcs with large weight.

The second problem, \ILPRobust, was recently introduced in~\cite{dias2024robust} and shown to be more accurate than \ILPLQ. The goal here is to account for errors at the level of solution paths, and minimize the sum of the errors of the paths. Formally, we need to find weighted paths as above, but we also need to associate a slack $\rho_i$ to each path minimizing $\sum_{i=1}^{k}\rho_i$, and such that the following condition holds:
\[\left|w(uv) - \sum_{i ~:~ uv \in P_i}w_i\right| \leq \sum_{i ~:~ uv \in P_i}\rho_i, ~~~\forall uv \in A.\]
By definition, solution paths $P_1,\dots,P_k$ need to cover every arc $uv$, since otherwise the above condition becomes $w(uv) \leq 0$, which contradicts the assumption that $w(uv)$ is positive.

We use safe sequences to simplify ILPs for these two problems, by fixing suitable ILP variables encoding paths, as in~\cite{acceleratingILP}. On various datasets, maximal safe paths can be computed in under 0.1 seconds per graph, on average, and such safety-optimized ILPs show significant speed-ups over the original ones. More specifically, on the harder instances of graphs with a large width, average speed-ups are in the range $50-250\times$ for \ILPRobust and in the range $80-350\times$ for \ILPLQ. As such, our optimizations can become a scalable building block of practical RNA transcript assembly tools, but also of general path-covering problems.

\section{Notation and preliminaries}
\label{sec:notation-preliminaries}

\paragraph*{Graphs.} In this paper we consider directed graphs, especially directed acyclic graphs (DAGs). For a DAG $G=(N,A)$ with node set $N$ and arc set $A$, we let $n=|N|$ and $m=|A|$. We denote an arc from $u$ to $v$ as $uv$ and a path $p$ through nodes $v_1,\dots,v_k$ as $p=v_1\dots v_k$. Two paths are considered distinct if they differ in at least one node.
The set of \emph{in-neighbors} (\emph{out-neighbors}) of a node $v$ is denoted $N^-_v$ ($N^+_v$), and its \emph{indegree} (\emph{outdegree}) as $d^-_v$ ($d^+_v$). For nodes $u,v$ we say that there is a \emph{$u$-$v$ path} if there is a path whose first node is $u$ and last node is $v$ (we also say that $u$ \emph{reaches} $v$). An \emph{antichain} is a set of pairwise unreachable nodes. The size of the largest antichain in a DAG is called \emph{width}. Analogously, for arcs $uv,xy$ we say that $uv$ reaches $xy$ if $v$ reaches $x$. An \emph{arc-antichain} is a set of pairwise unreachable arcs, and the \emph{arc-width} of a DAG is the size of the largest arc-antichain of the graph.

For $X \subseteq N$ we denote a \emph{sequence} $X$ through nodes $u_1,\dots,u_\ell$ as $X=u_1,\dots,u_\ell$ if there is a $u_i$-$u_{i+1}$ path for all $1 \le i \le \ell-1$; in particular, a path is a sequence. We say that $X$ \emph{covers} or \emph{contains} a node $v$ if $v=u_i$ for some $i \in \{1, \dots, \ell\}$, and we write $u \in X$. If $Y$ is a sequence and each node of $X$ is contained in $Y$ we say that $X$ is a \emph{subsequence} of $Y$, or that $X$ is \emph{contained} in $Y$. Since $G$ is a DAG, the nodes of $X$ appear in the same order as in $Y$.

Let $X'=v_1,v_2,\dots,v_{\ell'}$ be a sequence such that there is a path from the last node $u_\ell$ of $X$ to $v_1$. The \emph{concatenation} of $X$ and $X'$ is simply the sequence $XX'$ obtained as $X$ followed by $X'$. If $u_\ell=v_1$, then the repeated occurrence of $v_1$ is removed, i.e. $XX' := u_1,\dots,u_\ell,v_2,\dots,v_{\ell'}$.
We adopt the convention that lowercase letters $u,v,w,x,y,z$ denote nodes and uppercase letters $X,Y$ denote sequences of nodes.

\paragraph*{Safety.} We define a \emph{path cover} of $G$ to be a set $P$ of $s$-$t$ paths such that $\forall u \in N ~ \exists p \in P : u \in p$; we also say \emph{cover} instead of ``$s$-$t$ path cover of $G$'' when the graph $G$ is clear from the context. Two path covers are distinct if they differ in at least one path. We assume that every node of $G$ lies in some $s$-$t$ path.
We say that a sequence $X$ is \emph{safe} for $G$ if in every path cover $P$ of $G$ it holds that $X$ is a subsequence of some path in $P$.
If $X$ is not safe we say that it is \emph{unsafe}. A sequence $X$ is \emph{maximally safe} if it is not a subsequence of any other safe sequence. We assume that there is always an $s$-$t$ path containing a given sequence of nodes, otherwise that sequence is vacuously unsafe.
Note that imposing the collection of $s$-$t$ paths of a cover to be minimal with respect to inclusion results in an equivalent definition of safety: any sequence that is safe for path covers is also safe for minimal path covers since every minimal path cover is a path cover; but also, if $X$ is safe for minimal path covers then it is also safe for general path covers, as otherwise we could have a path cover $P$ avoiding $X$ from which arbitrary $s$-$t$ paths can be removed until $P$ becomes minimal.

\paragraph*{Graph compression.} We define a \emph{unitary path} as a path where each node has indegree equal to one but the first one, and each node has outdegree equal to one but the last one. For example, in \Cref{fig:seq-example}, the paths $de$ and $pq$ are unitary, the path $cde$ is maximally unitary. With respect to $G$, we define the \emph{compressed graph of $G$} as the graph where every maximal unitary path of $G$ is contracted into a single node. It is a folklore result that compressing a graph can be done in linear time (see e.g.~\cite{chikhi2016compacting,omnitigs_tomescu}). We state this fact in the next lemma together with an additional property stating that compressed graphs cannot have arcs $uv$ with $v$ the unique out-neighbor of $u$ and $u$ the unique in-neighbor of $v$, since $uv$ would then be a unitary path.
\begin{lemma}
\label{lem:compressed-graph}
Let $G$ be a directed graph with $n$ nodes and $m$ arcs. Then the following hold:
\begin{enumerate}
\item We can compute the compressed graph of $G$ in $O(n+m)$ time.\label{lem:compressed-graph-construction}
\item In a compressed graph there is no arc $uv$ such that $N^+_u = \{v\}$ and $N^-_v = \{u\}$.\label{lem:compressed-graph-arcs}
\end{enumerate}
\end{lemma}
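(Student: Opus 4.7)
The plan is to treat the two statements independently, the first being algorithmic and the second a structural claim about the output of any correct compression algorithm.

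For statement~\ref{lem:compressed-graph-construction}, I would first compute the in-degree and out-degree of every node in $O(n+m)$ time, and then call an arc $uv$ \emph{contractible} when $d^+_u = 1$ and $d^-_v = 1$. A maximal unitary path with at least two nodes is exactly a maximal directed chain of contractible arcs $v_1 v_2, v_2 v_3, \dots, v_{k-1} v_k$, because the definition of unitary path translates precisely to requiring that each consecutive pair satisfy the contractibility condition. To enumerate these chains without duplicates, I would scan the nodes once and identify the \emph{starting} nodes, i.e.\ nodes $v$ with an outgoing contractible arc but no incoming contractible arc (equivalently, $d^+_v = 1$, and either $d^-_v \neq 1$ or the unique in-neighbor of $v$ has out-degree $\neq 1$). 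From each such starting node I would walk forward along contractible arcs in a single pass; each arc is then visited $O(1)$ times across the whole process, so the total time is $O(n+m)$. Building the compressed graph amounts to re-routing arcs entering $v_1$ and leaving $v_k$ to a single super-node per chain, which is also linear.

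For statement~\ref{lem:compressed-graph-arcs}, I would argue by contradiction. Suppose the compressed graph contains an arc $uv$ with $N^+_u = \{v\}$ and $N^-_v = \{u\}$. Each of $u$ and $v$ represents a maximal unitary path of the uncompressed input; denote them $P_u = z_1 \dots z_k$ and $P_v = y_1 \dots y_\ell$. The assumption $N^+_u = \{v\}$ in the compressed graph forces the only out-neighbor of $z_k$ in the input to be $y_1$, and $N^-_v = \{u\}$ forces the only in-neighbor of $y_1$ to be $z_k$. Hence the concatenation $z_1 \dots z_k y_1 \dots y_\ell$ is still a path in which every non-first node has in-degree $1$ and every non-last node has out-degree $1$, i.e.\ a unitary path strictly extending both $P_u$ and $P_v$, contradicting their maximality.

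The only real subtlety I foresee is in statement~\ref{lem:compressed-graph-construction}: a general directed graph may contain a connected component that is a simple directed cycle in which every node has $d^- = d^+ = 1$, and such a component has no starting node and would be missed by the forward walk above. For the $s$-$t$ DAG setting of the paper, acyclicity rules this case out; in the fully general statement one adds a second linear-time pass that detects any unprocessed contractible arcs and treats each residual cycle separately. Statement~\ref{lem:compressed-graph-arcs} is unaffected by this subtlety, so the overall structure of the proof remains clean.
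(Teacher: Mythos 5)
Your proof is correct and matches what the paper intends: the paper gives no formal proof of this lemma, citing part~1 as a folklore linear-time construction and justifying part~2 with exactly the one-line observation you expand in detail (such an arc $uv$ would itself be a unitary path joining the two contracted maximal unitary paths, contradicting their maximality). Your caveat about pure directed cycles is a fair point for general digraphs but, as you note, is moot in the paper's acyclic $s$-$t$ setting.
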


The compression of unitary paths preserves safety: every $s$-$t$ path in $G$ traversing the unitary path traverses the corresponding node in the compressed graph of $G$, and every $s$-$t$ path traversing a node in the compressed graph of $G$ necessarily traverses the unitary path it corresponds to in $G$. This idea appeared in the context of walk-finding problems in \cite{genome_assembly_from_practice_to_theory}.

\paragraph*{Cutnodes and dominators.} 

In the dominator literature, the directed graphs on which dominators are computed have a single start node $s$ from which every node is reachable. In some cases, the graphs have instead or additionally a unique sink $t$ which is reached by every node, in which case the dominance relation is referred to as post-dominance (see e.g., \cite{allen1970control,gupta1992generalized}). A prominent application of dominators is in the context of program analysis, where $s$ corresponds to the entry point of the program and $t$ to its exit point. Our graphs always have unique abstract nodes $s$ and $t$, hence, diverging from standard notation, we parameterize the domination relation by $s$ and $t$. We borrow standard concepts and notation from the literature of dominators, which we describe next.

Let $G$ be an $s$-$t$ directed graph. A node $u$ is said to \emph{$s$-dominate} a node $v$ if every $s$-$v$ path contains $u$, hence $u$ is an $s$-$v$ cutnode if and only if $u$ $s$-dominates $v$. Every node $s$- and $t$-dominates itself. We also say that $v$ \emph{$t$-dominates} $u$ if every $u$-$t$ path contains $v$, which is equivalent to $v$ being a $u$-$t$ cutnode.
We say that $u$ \emph{strictly $s$-dominates} $v$ if $u$ $s$-dominates $v$ and $u \neq v$. For $v \neq s$, the \emph{immediate $s$-dominator} of $v$ is the strict $s$-dominator of $v$ that is dominated by all the nodes that strictly dominate $v$, in which case we write $idom_s(v)=u$.
The domination relation on $G$ with respect to $s$ can be represented as a rooted tree at $s$ with the same nodes as $G$, called the \emph{$s$-dominator tree} of $G$. In this tree, the parent of every node is its immediate $s$-dominator, every node is $s$-dominated by all its ancestors and $s$-dominates all its descendants and itself. A node is said to be an \emph{$s$-dominator} if it strictly $s$-dominates some node.
The same terminology will be used for the $t$-domination relation.
We also define a function $dom: N \times \mathbb{N}^+ \to N$ on rooted trees as follows. Let $dom_s(v,k) := idom_s(v)$ when $k=1$ and $dom_s(v,k) := dom_s(dom_s(v,1),k-1)$ when $k>1$. If $k$ is larger than the number of \emph{strict} dominators of a given node, then it defaults to $s$ (resp.~$t$) in the $s$-dominator tree (resp. $t$-dominator tree). Essentially, $dom$ gives the $k$-th ancestor of a node in a rooted tree, if it is well defined, otherwise it returns the root of the tree. For example, in \Cref{fig:seq-example}, the immediate $s$-dominator of $b$ is $a$ ($dom_s(b,1)=a$) and $dom_t(q,3)=t$. Node $c$ is a strict $t$-dominator of the nodes $\{a,b,k,k'\}$. See~\cite{parotsidis2013dominators} for an in-depth presentation of dominators.

In DAGs, dominators admit a characterization, which we state in \Cref{lemma:node_dominator} below. This was first presented by Ochranov{\'a}~\cite{ochranova1983finding}, but with an incomplete proof. This characterization was also discussed by Alstrupt et al.~\cite[Sec.~5.1]{alstrup1999dominators}. To the best of our knowledge, we did not find a complete proof of this characterization in the literature, and hence we give one in~\Cref{sec:additional-proofs}.

\begin{restatable}[Characterization of node-dominators in DAGs]{lemma}{dominatorscharacterization}
\label{lemma:node_dominator}
    Let $G=(N,A)$ be an $s$-$t$ DAG. A node $u \in N$ is an $s$-dominator (resp. $t$-dominator) if and only if there is a node $v$ such that $N^-_v=\{u\}$ (resp. $N^+_v=\{u\}$).
\end{restatable}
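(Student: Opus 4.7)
The plan is to prove each direction separately, for the $s$-dominator case, and then invoke symmetry for the $t$-dominator case (by reversing the arcs of $G$, the roles of $s$ and $t$ swap, and in-neighborhoods become out-neighborhoods).

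For the easy direction $(\Leftarrow)$, suppose there is $v \in N$ with $N^-_v = \{u\}$. Then every $s$-$v$ path must end with the arc $uv$, because $v$'s only in-neighbor is $u$. Hence every $s$-$v$ path contains $u$, so $u$ $s$-dominates $v$. Since $uv$ is an arc in a DAG we have $u \neq v$, so $u$ strictly $s$-dominates $v$, which by definition makes $u$ an $s$-dominator.

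For the hard direction $(\Rightarrow)$, let $u$ be an $s$-dominator and let $V_u \subseteq N$ be the (nonempty) set of nodes that $u$ strictly $s$-dominates. Fix any topological order on $G$ and let $v$ be the topologically earliest node of $V_u$. I claim $N^-_v = \{u\}$. Since $v \neq u$ is reachable from $s$ through $u$, it has at least one in-neighbor, so it suffices to rule out any in-neighbor $x \neq u$. Suppose such $x$ exists. For an arbitrary $s$-$x$ path $P$, the concatenation $P \cdot xv$ is an $s$-$v$ path, which by definition of $V_u$ contains $u$. Since $u \neq x$ and $u \neq v$, the node $u$ must lie on the $s$-$x$ portion $P$. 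As $P$ was arbitrary, every $s$-$x$ path passes through $u$, so $u$ strictly $s$-dominates $x$, i.e., $x \in V_u$. But $x$ is an in-neighbor of $v$ in a DAG, so $x$ precedes $v$ in topological order, contradicting the choice of $v$. Hence $N^-_v = \{u\}$.

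The main obstacle is to argue that an alternate in-neighbor $x \neq u$ of $v$ is itself strictly $s$-dominated by $u$; the key observation is that $u \neq x,v$ forces $u$ to appear on the $P$-portion of every extended path $P\cdot xv$, which upgrades ``$u$ is on some $s$-$x$ path'' into ``$u$ is on every $s$-$x$ path''. Choosing $v$ to be topologically minimal in $V_u$ is what makes this upgrade produce a contradiction. The $t$-dominator statement is then obtained by the standard arc-reversal duality: apply the already-established $s$-dominator case to the reverse graph, in which in-neighborhoods become out-neighborhoods and $s$ takes the role of $t$.
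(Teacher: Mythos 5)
Your proof is correct, and for the hard direction it takes a genuinely different route from the paper's. (The $(\Leftarrow)$ direction and the arc-reversal symmetry for $t$-dominators are essentially the same in both.) The paper proves $(\Rightarrow)$ by showing that if every out-neighbor $v_i$ of $u$ has an in-neighbor $z_i \neq u$, then $u$ dominates nothing: it first treats the case $z_i \notin \{u\} \cup N^+_u$ (Ochranov\'a's original observation) and then repairs the remaining case---where some out-neighbors of $u$ are fed only by other out-neighbors of $u$---by constructing the induced subgraph $G_V$ of nodes lying on paths between out-neighbors of $u$, locating its sources, and routing an explicit $u$-avoiding $s$-$w$ path through an in-neighbor of such a source. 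You avoid this case analysis and construction entirely with an extremal argument: take $v$ topologically minimal among the nodes strictly $s$-dominated by $u$, and observe that any in-neighbor $x \neq u$ of $v$ would itself be strictly $s$-dominated (every $s$-$x$ path extends by the arc $xv$ to an $s$-$v$ path, which must contain $u$, and $u \neq v$ pushes $u$ onto the $s$-$x$ prefix), contradicting minimality since $x$ precedes $v$ in the topological order. Your argument is shorter and more self-contained; what the paper's version buys is an explicit $u$-avoiding witness path to every dominated node and a precise account of where Ochranov\'a's argument was incomplete, which is the point the lemma is meant to settle. Two minor remarks: only $u \neq v$ (not also $u \neq x$) is needed to place $u$ on the prefix $P$, and the existence of at least one in-neighbor of $v$ is best justified by noting that $v \neq s$ (no node other than $s$ itself can strictly dominate $s$, by the trivial path) together with reachability of $v$ from $s$.
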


\section{Safe sequences via dominators}
\label{sec:safe-sequences-dominators}

\subsection{Characterization of safe sequences}
\label{sec:characterization}

We start by proving \Cref{thm:safe_sequences_char}.
\begin{proof}[Proof of \Cref{thm:safe_sequences_char}]
    $(1 \Rightarrow 2)$ If for every node $u \in N$ there is an $s$-$t$ path covering $u$ that avoids $X$, then we can build a path cover by considering for every node $u$ of $G$ an $s$-$t$ path avoiding $X$ that covers $u$. This contradicts the safety of $X$.
    
    $(2 \Rightarrow 3)$ Since every $s$-$t$ path covering $u$ contains the sequence of $s$-$u$ cutnodes followed by the sequence of $u$-$t$ cutnodes and $X$ is contained in every such path, the implication follows by taking $u=v$.

    $(3 \Rightarrow 1)$ Since $X$ is a subsequence of a sequence containing every $s$-$v$ and $v$-$t$ cutnode and $v$ must be covered in every path cover by some $s$-$t$ path, any of which contains the $s$- and $t$-cutnodes of $v$, it follows that $X$ is safe.\hfill $\square$
\end{proof}

We will heavily use the type of sequence obtained as in point 3 of \Cref{thm:safe_sequences_char}. As such, we define the \emph{extension} of a node $v$, $\ext{v}$, as the sequence obtained from the concatenation of the sequence of $s$-$v$ cutnodes with the sequence of $v$-$t$ cutnodes. By \Cref{thm:safe_sequences_char} every maximal safe sequence is the extension of some node, and so in a DAG with $n$ nodes and $m$ arcs there are at most $n$ maximal safe sequences. Moreover, for any node $v \in N$, $\ext{v}$ can be computed in time $O(m+n)$ by computing the $s$-$v$ cutnodes and $v$-$t$ cutnodes with e.g.~the simple algorithm of Cairo et al.~\cite{CAIRO2021103}. Thus, \Cref{thm:safe_sequences_char} implies that we can compute in time $O(nm)$ a set of safe sequences containing every maximal safe sequence of $G$, by reporting $\ext{v}$ for every node $v$. However, the reported sequences may not be maximal and the same sequence may be reported multiple times.

Because of the correspondence between cutnodes and dominators explained in \Cref{sec:notation-preliminaries}, the extension of a node can also be expressed as suitable paths in dominator trees. We will use this fact in the rest of the paper.

\begin{remark}
\label{remark:extension-and-dominators}
    In an $s$-$t$ DAG $G$, $\ext{v}$ is the same as the path from $s$ to $v$ in the $s$-dominator tree of $G$ concatenated with the path from $v$ to $t$ in the $t$-dominator tree of $G$.
\end{remark}

\subsection{Properties of s- and t-dominator trees}

We start with a lemma showing an interplay between $s$-domination and $t$-domination on general $s$-$t$ directed graphs.

\begin{lemma}
\label{lem:st-dom-immediate-relation}
    Let $G=(N,A)$ be an $s$-$t$ directed graph, let $u,v \in N$ be nodes, and let $k \in \mathbb{N}^+$. If $u$ is the $k$-th ancestor of $v$ in the $s$-dominator tree, then $dom_t(u,k)$ $t$-dominates $v$, and $v$ is not a $t$-dominator of $u$ unless $v=dom_t(u,k)$.
\end{lemma}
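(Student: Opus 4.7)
Let me denote by $a_0 = v, a_1, \ldots, a_d = s$ the ancestors of $v$ in the $s$-dominator tree, so that $u = a_k$, and by $b_0 = u, b_1, \ldots, b_e = t$ the ancestors of $u$ in the $t$-dominator tree, so that $dom_t(u,k) = b_k$ when $k \leq e$ (and $t$ by default otherwise). My plan is to first prove an auxiliary structural fact: the $u$-$v$ cutnodes are exactly $\{a_0, \ldots, a_k\}$, appearing in the order $a_k = u, a_{k-1}, \ldots, a_0 = v$ on every $u$-$v$ path. Each such $a_i$ is both $s$-dominated by $u$ and an $s$-dominator of $v$, so appending a $u$-$v$ path to any simple $s$-$u$ path and reducing to a simple $s$-$v$ path forces $a_i$ onto the $u$-$v$ portion (it cannot appear in the $s$-$u$ prefix, else $u$ would occur twice). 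Conversely, any $u$-$v$ cutnode is seen to be an $s$-dominator of $v$ sitting on or below $u$ in the $s$-dominator tree, since otherwise the $u$-$v$ suffix of an $s$-$v$ path through $u$ would avoid it.

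For the first conclusion, the case $k > e$ is immediate since $dom_t(u,k) = t$. Assume $k \leq e$ and, toward contradiction, that some $v$-$t$ path $Q$ avoids $b_k$. For any $u$-$v$ path $R$, the walk $RQ$ contains a simple $u$-$t$ path, which must contain the $u$-$t$ cutnode $b_k$; since its nodes lie in $R \cup Q$ and $b_k \notin Q$, we obtain $b_k \in R$. So $b_k$ is a $u$-$v$ cutnode, and hence $b_k = a_i$ for some $0 \leq i < k$ (the case $i = k$ is excluded because $b_k \neq u$). The next step is to upgrade this to: every $b_j$ with $0 \leq j \leq k$ is a $u$-$v$ cutnode. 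Since $b_k$ already lies on some $u$-$v$ path it reaches $v$, and every $b_j$ with $j < k$ precedes $b_k$ in the fixed dominator order on any $u$-$t$ path; arranging such a path to pass through $b_k$ before $v$ shows $b_j$ also reaches $v$. Acyclicity then prevents $b_j$ from lying on any $v$-$t$ path, so repeating the walk argument forces $b_j$ onto every $u$-$v$ path. Pigeonholing the $k+1$ distinct cutnodes $b_0, \ldots, b_k$ against the $k+1$ available slots $\{a_0, \ldots, a_k\}$, and matching orders (both sequences begin at $u$), forces $b_j = a_{k-j}$; in particular $b_k = a_0 = v$, contradicting $b_k \notin Q$.

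For the second conclusion, suppose $v$ is a $t$-dominator of $u$, so $v = b_j$ for some $j \geq 1$ (using $k \geq 1$). Applying the first conclusion to $(u,v,k)$ gives that $b_k$ $t$-dominates $b_j = v$, so $k \geq j$. Reversing the graph swaps the roles of $s$ and $t$, so the first conclusion applied in the reversed graph yields the symmetric statement: if $v$ is the $j$-th ancestor of $u$ in the $t$-dominator tree, then $a_j = dom_s(v,j)$ $s$-dominates $u = a_k$, hence $j \geq k$. Combining the two inequalities gives $j = k$ and therefore $v = b_k = dom_t(u,k)$.

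The main obstacle will be upgrading ``$b_k$ is a $u$-$v$ cutnode'' to ``every $b_j$ with $0 \leq j \leq k$ is a $u$-$v$ cutnode'', since only then does the pigeonhole collapse into $\{a_0, \ldots, a_k\}$ trigger. Acyclicity provides the leverage here: it forbids a $u$-$t$ cutnode from simultaneously reaching $v$ and being reached from $v$, which is what lets the walk argument propagate from $b_k$ down to each earlier $b_j$.
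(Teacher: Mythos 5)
Your argument takes a genuinely different route from the paper's: you first pin down the $u$-$v$ cutnodes as exactly the segment $a_0,\dots,a_k$ of the $s$-dominator tree between $v$ and $u$, then force the chain $b_0,\dots,b_k$ of $t$-dominators of $u$ into those $k+1$ slots and read off $b_k=v$ by matching the two orders. The paper never passes through a cutnode characterization; it counts the intermediate dominators directly (some $x_j$ strictly between $dom_t(u,k)$ and $u$ in the $t$-dominator tree must miss the segment $y_1,\dots,y_{k-1}$, yielding a $u$-$t$ path avoiding $x_j$), concludes that $dom_t(u,k)$ is off the $s$-tree segment, and finishes by concatenating two paths avoiding it. Restricted to $s$-$t$ DAGs your argument is essentially sound; the only slip I see is that ``acyclicity prevents $b_j$ from lying on any $v$-$t$ path'' is false when $b_j=v$, but in that sub-case $b_j$ is trivially a $u$-$v$ cutnode (it is the endpoint of every $u$-$v$ path), so the conclusion you actually need survives.

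The genuine gap is one of generality: the lemma is stated for an arbitrary $s$-$t$ \emph{directed graph}, and the paper's proof uses no acyclicity, whereas yours leans on it exactly where you say it does. To rule out $b_j$ lying on the $v$-$t$ path $Q$ you argue that $b_j$ reaches $v$ and hence cannot be reached from $v$; in a graph with cycles a $u$-$t$ cutnode can perfectly well both reach $v$ and be reached from $v$, so the walk argument no longer forces $b_j$ onto every $u$-$v$ path and the propagation from $b_k$ down to the earlier $b_j$'s collapses. The order-matching step is also cleanest only under acyclicity (simplicity of the concatenation $RQ$ and uniqueness of positions on $R$). Since the paper only ever applies the lemma to DAGs (directly, or to line graphs of DAGs), your proof would suffice for every downstream use, but it does not establish the statement as written; to match the stated generality you would either have to weaken the hypothesis to DAGs or rework the key step along the lines of the paper's counting argument, which avoids reachability reasoning entirely.
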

\begin{proof}
    Note that $u \neq t$ since $u$ is an $s$-dominator. Let $w=dom_t(u,k)$ and let $y_1,\dots,y_{k-1}$ be the nodes between $u$ and $v$ in the $s$-dominator tree. We can assume $w \neq v$.

    Observe that $w \neq y_i$ for all $i \in \{1,\dots,k-1\}$. Suppose otherwise. Then $w \neq t$ because $t$ is not an $s$-dominator. Hence, by definition of $dom_t$, there are $k-1$ nodes $x_1,\dots,x_{k-1}$ between $w$ and $u$ in the $t$-dominator tree. Some $x_j$ is not between $u$ and $w$ in the $s$-dominator tree, and hence there is a $u$-$w$ path avoiding $x_j$. Further, since $x_j$ is strictly $t$-dominated by $w$ there is a $w$-$t$ path avoiding $x_j$. Concatenating these paths at $w$ results in a $u$-$t$ path avoiding $x_j$, contradicting the fact that $x_j$ $t$-dominates $u$.

    Suppose now for a contradiction that $v$ is not strictly $t$-dominated by $w$. From the point above, there is a $u$-$v$ path avoiding $w$. This path concatenated with a $v$-$t$ path avoiding $w$ gives a $u$-$t$ path also avoiding $w$, contradicting the fact that $w$ $t$-dominates $u$.
    
    It remains to show that $v$ cannot $t$-dominate $u$. Suppose otherwise. Then we have $v=x_i$ for some $i \in \{1,\dots,k-1\}$, since $v$ is strictly $t$-dominated by $w$ and $t$-dominates $u$. Thus, some $y_j$ is not between $v$ and $u$ in the $t$-dominator tree, implying that there is a $u$-$v$ path avoiding $y_j$. Since $y_j$ is $s$-dominated by $u$, there is an $s$-$u$ path avoiding $y_j$, which concatenated with the previous path gives an $s$-$v$ path also avoiding $y_j$, contradicting the fact that $y_j$ $s$-dominates $v$. \hfill $\square$  
\end{proof}

\Cref{lem:st-dom-immediate-relation} can be interpreted for the special case of $k=1$ as follows: if $u$ immediately $s$-dominates $v$ then the immediate $t$-dominator of $u$ $t$-dominates $v$ (see~\Cref{fig:proof-sketch}). Another consequence of this result is that if $u$ is an $s$-dominator, then at most one node immediately $s$-dominated by $u$ is in the path from $t$ to $u$ in the dominator tree of $t$, which must be the immediate $t$-dominator of $u$.

Observe now in \Cref{fig:seq-example} that $c$ is the immediate $s$-dominator only of $d$ and that $d$ is the immediate $t$-dominator only of $c$, and likewise for $d$ and $e$, and so $\ext{c}=\ext{d}=\ext{e}$. This redundancy and asymmetry introduced by unitary paths is undesirable in the characterization of maximal safe sequences. This discussion motivates the following result.

\begin{lemma}
\label{lemma:no_redundancy_nodes}
    Let $G=(N,A)$ be a compressed $s$-$t$ DAG and let $u,v\in N$ be distinct nodes. If $u$ immediately $s$-dominates $v$ and $v$ immediately $t$-dominates $u$ then $u$ immediately $s$-dominates some node different from $v$ and $v$ immediately $t$-dominates some node different from $u$.
\end{lemma}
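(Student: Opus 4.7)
The plan is to apply the characterization of node-dominators on DAGs (\Cref{lemma:node_dominator}) to extract explicit candidate children of $u$ and $v$ in their respective dominator trees, and then use the compressed-graph restriction (\Cref{lem:compressed-graph}(\ref{lem:compressed-graph-arcs})) to rule out the single degenerate case. Concretely, because $u$ immediately $s$-dominates $v$, $u$ is an $s$-dominator, so by \Cref{lemma:node_dominator} there is a node $x$ with $N^-_x=\{u\}$. A short observation (every $s$-$x$ path must enter $x$ through its unique in-neighbor $u$, so any strict $s$-dominator of $x$ is also an $s$-dominator of $u$) shows that $u=idom_s(x)$, i.e.\ $x$ is a child of $u$ in the $s$-dominator tree. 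Symmetrically, \Cref{lemma:node_dominator} applied to the $t$-dominator $v$ yields a node $y$ with $N^+_y=\{v\}$ and $v=idom_t(y)$.

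It then suffices to show $x\neq v$, which immediately gives that $u$ immediately $s$-dominates the node $x$ different from $v$. Suppose for contradiction that $x=v$, so $N^-_v=\{u\}$. Since $y$ has an arc into $v$, $y$ is an in-neighbor of $v$, which forces $y=u$; therefore $N^+_u=\{v\}$. Combining $N^+_u=\{v\}$ with $N^-_v=\{u\}$ on the arc $uv$ is precisely the configuration forbidden by \Cref{lem:compressed-graph}(\ref{lem:compressed-graph-arcs}) on a compressed graph, a contradiction. The second half of the statement, that $v$ immediately $t$-dominates some node different from $u$, follows by the same argument with the roles of $(s,u,x)$ and $(t,v,y)$ interchanged: assuming $y=u$ forces $x=v$ and the same violation of \Cref{lem:compressed-graph}(\ref{lem:compressed-graph-arcs}).

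The only subtlety in this plan, and the main (mild) obstacle, is justifying that the witness $x$ supplied by \Cref{lemma:node_dominator} is in fact an \emph{immediate} $s$-dominator child of $u$, rather than merely some descendant: without this, the reduction to the single case $x=v$ would be unavailable. Once this is established, the entire argument collapses to the one-line contradiction above, and the lemma follows from a purely local inspection of the arc $uv$ in the compressed graph together with the in-/out-neighborhood characterization of dominators.
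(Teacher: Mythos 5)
Your proof is correct. It rests on the same two ingredients as the paper's: the in-/out-neighborhood characterization of dominators (\Cref{lemma:node_dominator}) and the forbidden arc configuration in compressed graphs (point~2 of \Cref{lem:compressed-graph}). The structural difference is in how the local configuration $N^+_u=\{v\}$, $N^-_v=\{u\}$ is reached. The paper assumes (WLOG, for contradiction) that $v$ is the \emph{only} child of $u$ in the $s$-dominator tree and derives $N^-_v=\{u\}$ by an acyclicity argument (any other in-neighbor $w$ of $v$ would be $s$-dominated by $u$, hence by $v$, yielding a $v$-$w$ path and thus a cycle through the arc $wv$); it then invokes \Cref{lemma:node_dominator} only on the $t$-dominator $v$. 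You instead apply \Cref{lemma:node_dominator} symmetrically to both $u$ and $v$ to get explicit witnesses $x$ and $y$, observe that these are \emph{immediate}-dominator children (which, as you note, needs a short justification, and is in fact already established in the $(\Leftarrow)$ direction of the paper's proof of \Cref{lemma:node_dominator}), and reduce everything to the one degenerate case $x=v$ (equivalently $y=u$), which the compression condition kills. Your route avoids the cycle argument entirely and makes the symmetry of the two conclusions transparent; the paper's route avoids having to argue that the witnesses of \Cref{lemma:node_dominator} are immediate children. Both are sound and of comparable length.
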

\begin{proof}
    Assume without loss of generality that $u$ immediately $s$-dominates only node $v$.
    
    We claim that $N^-_v=\{u\}$. Suppose otherwise. If $N^-_v$ is empty then $v=s$, and so $v$ is not a $t$-dominator, a contradiction. So there is an arc $wv$ with $w \neq u$. Since $u$ $s$-dominates $v$, it must also $s$-dominate every in-neighbour of $v$, namely $w$. But $u$ immediately $s$-dominates only $v$, thus, $v$ must $s$-dominate $w$, implying that there is a $v$-$w$ path. Since $wv \in A$, $G$ contains a cycle, a contradiction.

    Since $v$ is a $t$-dominator, by \Cref{lemma:node_dominator} there is a node whose unique out-neighbour is $v$. If this node is $u$, we contradict the fact that the graph is compressed by \Cref{lem:compressed-graph}. Therefore, $v$ must have at least two distinct in-neighbors, a contradiction.\hfill $\square$
\end{proof}

\begin{figure}[t]
    \centering
    \includegraphics{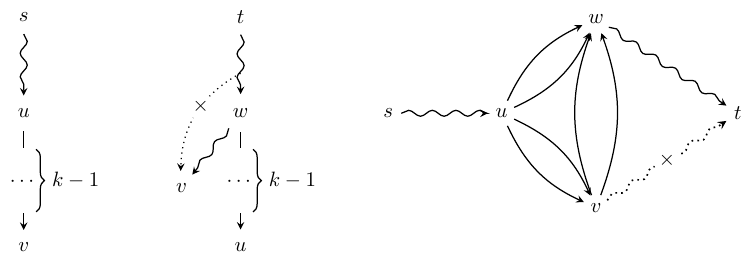}
    \caption{Illustration of the proof of \Cref{lem:st-dom-immediate-relation}. The two dominator trees discussed in the proof are shown on the left. On the right, an abstract graph for the special case of $k=1$ is shown; node $v$ cannot reach $t$ without $w$, without contradicting the fact that $dom_t(u,1)=w$.}
    \label{fig:proof-sketch}
\end{figure}

\subsection{Characterization and enumeration of maximal safe sequences}

Using \Cref{lem:st-dom-immediate-relation,lemma:no_redundancy_nodes}, we can give a precise characterization of maximal safe sequences.

\begin{theorem}[Characterization of maximal safe sequences]
\label{thm:cores-nodes}
    Let $G=(N,A)$ be a compressed $s$-$t$ DAG and let $u \in N$ be a node. Then $\ext{u}$ is a maximal safe sequence if and only if $u$ is a leaf in both the $s$- and $t$-dominator trees of G.
\end{theorem}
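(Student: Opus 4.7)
My plan is to prove both directions using \Cref{remark:extension-and-dominators}, which lets me view $\ext{v}$ as the $s$-to-$v$ path in the $s$-dominator tree concatenated with the $v$-to-$t$ path in the $t$-dominator tree. Under this view, a node $w$ lies in $\ext{v}$ if and only if $w$ is either an ancestor of $v$ in the $s$-dominator tree or an ancestor of $v$ in the $t$-dominator tree (equivalently, $w$ $s$-dominates or $t$-dominates $v$). I will combine this with \Cref{thm:safe_sequences_char}, which ensures every maximal safe sequence has the form $\ext{v}$ for some $v$.

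For the ($\Leftarrow$) direction, assume $u$ is a leaf in both dominator trees. \Cref{thm:safe_sequences_char} guarantees $\ext{u}$ is safe. To show maximality, suppose by contradiction that $\ext{u}$ is a proper subsequence of a safe sequence $Y$. By \Cref{thm:safe_sequences_char}, $Y$ is contained in some $\ext{v}$, so $\ext{u} \subsetneq \ext{v}$, and in particular $u \in \ext{v}$. By the observation above, either $u = v$ (giving $Y \subseteq \ext{u}$, contradicting properness), or $u$ is a proper $s$- or $t$-ancestor of $v$, contradicting $u$ being a leaf in the respective dominator tree.

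For the ($\Rightarrow$) direction, I will argue by contrapositive. Suppose $u$ is not a leaf in (say) the $s$-dominator tree, so $u$ has at least one child $v$ there. The plan is to exhibit a child $v'$ of $u$ in the $s$-dominator tree such that $\ext{u} \subsetneq \ext{v'}$, contradicting maximality. Containment $\ext{u} \subseteq \ext{v'}$ follows from \Cref{lem:st-dom-immediate-relation} applied with $k=1$: every strict $t$-ancestor of $u$ also $t$-dominates $v'$, and the $s$-dominator-tree path $s \to u$ extends by one step to $s \to u \to v'$. The delicate point is properness: we need $v' \notin \ext{u}$, i.e., $v'$ should not $t$-dominate $u$ (it trivially cannot $s$-dominate $u$ since it is a child of $u$). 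By \Cref{lem:st-dom-immediate-relation}, a child $v'$ of $u$ fails this condition only when $v' = dom_t(u,1)$, so as long as $u$ has a child $v' \neq dom_t(u,1)$ in the $s$-dominator tree, we are done. The symmetric argument handles the case when $u$ is not a leaf in the $t$-dominator tree.

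The main obstacle is the pathological case in which $u$ has a unique child $v$ in the $s$-dominator tree and $v = dom_t(u,1)$, so the straightforward construction above provides no alternative child. This is precisely the situation ruled out by \Cref{lemma:no_redundancy_nodes} together with the compression hypothesis: it guarantees that $u$ immediately $s$-dominates some node different from $v$, producing the required $v'$. This is the step where compression is essential, since in a non-compressed DAG the unitary-path redundancy illustrated in \Cref{fig:seq-example} (e.g.\ the triple $c,d,e$) would genuinely prevent the argument from going through.
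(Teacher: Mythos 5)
Your proposal is correct and follows essentially the same route as the paper: the $(\Leftarrow)$ direction via the dominator-tree view of extensions (\Cref{remark:extension-and-dominators} plus \Cref{thm:safe_sequences_char}), and the $(\Rightarrow)$ direction by producing a child $v'$ of $u$ whose extension properly contains $\ext{u}$, using \Cref{lem:st-dom-immediate-relation} with $k=1$ for containment and \Cref{lemma:no_redundancy_nodes} (with compression) to dispose of the degenerate case where the unique child coincides with $dom_t(u,1)$. The only differences are cosmetic: your case split is organized around whether a child distinct from $dom_t(u,1)$ exists rather than around an arbitrarily chosen child, and your $(\Leftarrow)$ argument is spelled out in more detail than the paper's.
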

\begin{proof}
    $(\Leftarrow)$ If $u$ is a leaf in both dominator trees then $\ext{u}$ is maximal: no other node can be added to this sequence by the equivalence of safe sequences and paths in the dominator trees given by \Cref{remark:extension-and-dominators,thm:safe_sequences_char}.

    $(\Rightarrow)$ Suppose that $\ext{u}$ is a maximal safe sequence but $u$ is not a leaf in both dominator trees. Without loss of generality suppose that $u$ is not a leaf in the $s$-dominator tree. Note that $u \neq t$ since $u$ is an $s$-dominator. Let $v$ be a node immediately $s$-dominated by $u$ and let $w$ denote the immediate $t$-dominator of $u$.
    
    First we argue that there is a node $z$ that is immediately $s$-dominated by $u$ and that is strictly $t$-dominated by $w$. We consider two cases. If $v \neq w$ then we can take $z = v$, because by \Cref{lem:st-dom-immediate-relation} for $k=1$ we can conclude that $w$ $t$-dominates $v$, and since $v \neq w$, it strictly $t$-dominates $v$. If $v = w$ then we can take $z=v'$, since by \Cref{lemma:no_redundancy_nodes} there is a node $v' \neq v$ that is immediately $s$-dominated by $u$; applying \Cref{lem:st-dom-immediate-relation} to $u$ and $v'$ with $k=1$, we conclude that $w$ $t$-dominates $v$, and since $w=v\neq v$, $w$ strictly $t$-dominates $v'$.
    
    Now we show that $\ext{z}$ properly contains $\ext{u}$.
    Consider $\ext{u}$, which by definition consists of the sequence of $s$-$u$ cutnodes (call it $Y_1$) concatenated with the sequence of $u$-$t$ cutnodes (call it $Y_2$). In $\ext{u}$, by definition, the double occurrence of $u$ is removed, and thus let $Y_2'$ be the sequence of $w$-$t$ cutnodes. That is, $\ext{u}$ can be written as $Y_1 Y_2'$.
    Consider also $\ext{z}$, which can be written analogously as $X_1 X_2'$.
    Since $u$ is the parent of $z$ in the $s$-dominator tree, we have that $Y_1$ is \emph{strictly} contained in $X_1$, and since $z$ is a proper descendant of $w$ in the $t$-dominator tree, we have that $Y_2'$ is contained in $X_2'$ (possibly $Y_2'=X_2'$). Therefore, $\ext{u}$ is a proper subsequence of $\ext{z}$, contradicting our assumption that $\ext{u}$ is a maximal safe sequence.\hfill $\square$
\end{proof}

The idea in the above theorem of targeting special nodes or arcs in order to characterize every maximal safe object is a recurring theme in safety problems. For instance, \cite{khan_et_al,ahmed2024evaluating} proposed the notion of \emph{representative arc}, \cite{schmidt_et_al} proposed that of \emph{core}, and \cite{macrotigs} found a special class of walks, coined \emph{macrotigs}, from where every maximal safe walk can be built. Each of these constructs was used in a different safety problem.

While the above characterization uses dominator trees, we can derive an equivalent characterization of dominator-tree leaves just in terms of in- and out-neighborhoods via \Cref{lemma:node_dominator}, since a node is a leaf if and only if it is not a dominator. Computing the extension of every node satisfying such condition leads to a simple $O(mn)$ time algorithm outputting all and only maximal safe sequences, without needing to compute the dominator trees, from which \Cref{thm:mn-all-only-maximal} follows.

Since dominator trees can be built in $O(m+n)$ time (see, e.g., the algorithm by Alstrup et al.~\cite{alstrup1999dominators}, Buchsbaum et al.~\cite{buchsbaum2008linear}), or Fraczak et al.~\cite{fraczak2013finding}), we can prove \Cref{thm:representation-optimal-enumeration}.

\begin{proof}[Proof of \Cref{thm:representation-optimal-enumeration}]
    Let $G'$ be the compressed graph of $G$. Build the dominator trees of $G'$ and compute all the leaves common to both trees. These three steps take $O(m+n)$ time.
    
    \textbf{1:} For every node that is a leaf in both dominator trees, output $\ext{v}$. By \Cref{thm:cores-nodes} we find every maximal safe sequence during this process. Further, no duplicate safe sequence is computed since different leaves common to both trees have different extensions. Therefore the algorithm runs in time equal to the length of all the maximal safe sequences plus the time required to build two dominator trees over $G'$, so $O(m+o)$ altogether.

    \textbf{2:} With the equivalence given by \Cref{thm:cores-nodes}, we have that every maximal safe sequence is the result of $\ext{v}$ for every node $v$ of $G'$ that is a leaf in both dominator trees. The sequence $\ext{v}$ is encoded in the dominator trees by \Cref{remark:extension-and-dominators}.\hfill $\square$
\end{proof}

\section{Experiments}
\label{sec:experiments}

The experiments were performed on an AMD 32-core machine with 512GB RAM. The source code of our project is publicly available on Github\footnote{\url{https://github.com/algbio/safe-sequences}}. All our algorithms are implemented in Python3, as are the scripts to produce the experimental results. To solve ILPs, we use Gurobi's Python API (version 11) under an academic license. For every graph, we run Gurobi with 4 threads and set up a timeout of 300 seconds in Gurobi's execution time.

\paragraph*{Implementation.}

As described in \Cref{sec:scaling-ILPs-contrib}, we study experimentally the potential of applying safe sequences in two arc-path covering problems, \ILPLQ and \ILPRobust. 

We did not use an external algorithm to compute arc-dominator trees because these are mostly tailored for node-dominators. Instead, we compute the immediate $s$-dominator and $t$-dominator of all arcs with an $O(m^2)$ time procedure via the algorithm of~\cite{CAIRO2021103}, appropriately handling the reversed graph of $G$ for the immediate $s$-domination relation. The classical algorithm of Aho and Ulman~\cite{aho1973theory} to build dominator trees has the same running time as our procedure. We remark that even with a suboptimal algorithm to compute the dominator trees, our entire safety-preprocessing step takes less than 0.1 seconds on average.

With the dominator trees, we can compute all the maximal safe sequences with respect to arbitrary subsets $C \subseteq A$ of arcs to be covered, all in time proportional to the total length of all the maximal safe sequences, completely analogously to the discussion in~\Cref{sec:safe-sequences-dominators}.

The ILPs were optimized as in~\cite{acceleratingILP} by fixing $x_{uvi}$ variables to 1 based on the maximal safe sequences of the input DAG, which reduces the search space of the linear program and hence speeds-up the solver. Let $X_1,\dots,X_t$ be sequences in the input DAG such that: (a) each $X_i$, $i \in \{1,\dots,t\}$, must appear in some solution path of the ILP, and (b) there exists no path in the DAG containing two distinct $X_i$ and $X_j$. In the case that every $X_i$ is a path, Grigorjew et al.~\cite{acceleratingILP} noticed that if the two mentioned properties hold, then $X_1,\dots,X_t$ must appear in $t$ distinct paths of among the $k$ solution paths of the ILP (see the three solid-colored sequences shown in~\Cref{fig:seq-example} which must appear in different paths of any path cover). As such, without loss of generality, one can set $x_{uvi} = 1$ for each arc $uv$ of $X_i$, for all $i \in \{1,\dots,t\}$. When using sequences to set binary variables, we can proceed in a completely analogous manner.
To have a large number of binary variables set to 1, we follow the approach of~\cite{acceleratingILP}. Namely, for every arc $uv$, we define its weight as the length of the longest safe sequence containing $uv$. Then we find a maximum-weight antichain of arcs in the DAG using the min-flow reduction of~\cite{rival2012graphs}. Then, for each $a_i$ in the maximum-weight antichain, we consider a longest safe sequence $X_i$ containing $a_i$ to fix variables to 1. Note that a path covering two such sequences $X_i$ and $X_j$ would witness that $a_i$ and $a_j$ cannot be in an antichain. Finally, we set $k$ to the arc-width of the graph and run Gurobi to solve the ILP.

Note that any solution to \ILPRobust is also a path cover, since each arc weight is positive, and thus the simplified ILP has equivalent optimal solutions. This may not be the case for \ILPLQ, since some arcs may not be covered by any solution path. To handle such scenarios we propose using maximal safe sequences for path covers where only a subset $C$ of nodes or arcs can be trusted to appear in any optimal solution.
As an illustration of this approach, we perform experiments by choosing the subset $C$ as those arcs whose weight is at least as large as the 25-th percentile of the weights of all arcs, since we can heuristically infer that the solution paths of \ILPLQ cover the arcs of large enough weight.

\paragraph*{Datasets.} We experiment with seven datasets. The first four contain splice graphs with erroneous weights created directly from gene annotation by~\cite{dias2024robust}.\footnote{Available at \url{https://doi.org/10.5281/zenodo.10775004}} These were created starting from the well-known dataset of Shao and Kingsford~\cite{shao2017theory}, which was also used in several other studies benchmarking the speed of minimum flow decomposition solvers~\cite{dias2022fast,kloster2018practical,williams2023subpaths}. The original splice graphs were created in~\cite{shao2017theory} from gene annotation from \textbf{Human}, \textbf{Mouse} and \textbf{Zebrafish}. This dataset also contains a set of graphs (\textbf{SRR020730}) from human gene annotation, but with flow values coming from a real sequencing sample from the Sequence Read Archive (SRR020730), quantified using the tool Salmon~\cite{salmon}. To mimic splice graphs constructed from real read abundances,~\cite{dias2024robust} added noise to the flow value of each arc, according to a Poisson distribution. The last three datasets contain graphs constructed by a state-of-the-art tool for RNA transcript assembly, IsoQuant~\cite{isoquant}. Mouse annotated transcript expression was simulated according to a Gamma distribution that resembles real expression profile~\cite{isoquant}. From these expressed transcripts, PacBio reads were simulated using IsoSeqSim~\cite{isoseqsim} and fed to IsoQuant for graph creation (we call this \textbf{Mouse PacBio}), and ONT reads were simulated using Trans-Nanosim~\cite{trans-nanosim}, which was modified to perform a more realistic read truncation (see~\cite{isoquant}), and fed to IsoQuant for graph creation (we call this \textbf{Mouse ONT}).
For the latter one, there is also a simpler version where no read truncation was introduced (\textbf{Mouse ONT.tr}). These are available at~\cite{isoquant-graphs}.

\paragraph*{Discussion.} We group the graphs in each dataset by width (column ``$w$''). For each width-interval we show different metrics, namely the number of graphs ``$\#$g'', the average number of arcs ``avg $m$'' (and the maximum number of arcs in parenthesis), the time taken in seconds by the safety preprocessing step in ``prep'', the percentage of fixed arc-variables $x_{uvi}$ of the ILP out of all $mk$ variables in ``vars'', the number of solved instances by the ILP with and without safety in ``$\#$solved'' (and in parenthesis the average time taken in seconds to solve those instances), and in the column ``$\times$'' we show the average speedup obtained by advising the linear program with safety. The speed-ups are computed by considering for each graph the ratio between the time taken by the original ILP (or 300 sec. if this timed-out) divided by the time taken by the optimized ILP, and averaging these ratios. In the ``prep'' and ``vars'' column, a dash is written only if no instance was solved with safety information. In the remaining columns a dash means not applicable.

In \Cref{table:rb-100} we show the experimental results of the safety-optimized ILPs for \ILPRobust.(As explained before, for this problem it is correct to select $C=A$.) When the width of the graph is small, the speed-ups are also small. However, such graphs are fast to solve without safety in the first place. The challenging cases are the graphs of larger widths and more arcs (indeed, the solver with no safety information performs poorly and it becomes slower as the width increases). The speed-up of the solver significantly increases as the width of the graph increases, becoming more than 200$\times$ in some datasets. Moreover, these speed-up are obtained almost for free, in the sense that the time needed to compute safe sequences is negligible for all datasets, i.e.~below 0.1 seconds, and often even much faster. We highlight the notable difference between the number of instances solved with no safety and with safety. In \Cref{table:lq-25} for \ILPLQ we observe the same phenomena, obtaining in some datasets speed-ups of 300$\times$ on graphs of larger widths, and solving always more instances with safety than without safety. In~\Cref{sec:additional-experiments} results for $C \subsetneq A$ in the \ILPRobust and for $C=A$ in the \ILPLQ can be found.

\paragraph*{Acknowledgments.} We are extremely grateful to Andrey Prjibelski for producing the IsoQuant graphs and to Manuel Cáceres for fruitful discussions on safe sequences. This work has received funding from the European Research Council (ERC) under the European Union's Horizon 2020 research and innovation programme (grant agreement No.~851093, SAFEBIO), and from the Research Council of Finland grants No.~346968, 358744. Co-funded by the European Union (ERC, SCALEBIO, 101169716). Views and opinions expressed are however those of the author(s) only and do not necessarily reflect those of the European Union or the European Research Council. Neither the European Union nor the granting authority can be held responsible for them.

\hspace{-1.1cm}\includegraphics[width=6cm]{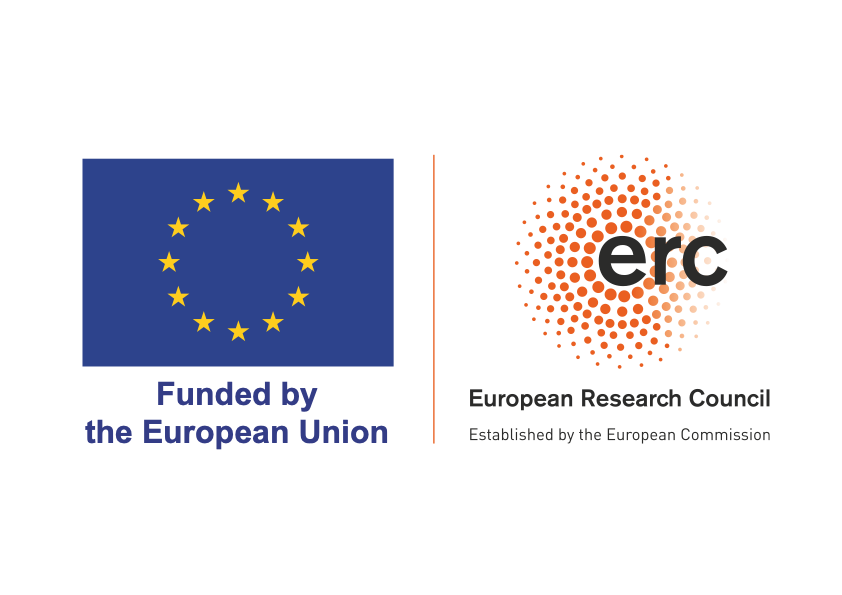}

\begin{table}[p]
\caption{Experimental results of \ILPRobust. Safety is with respect to path covers of $C = A$.\label{table:rb-100}}
\begin{center}
\begin{tabular}{|r|r|r|r|r|r|r|r|r|}
\hline
& \multirow{2}{*}{$w$} & \multirow{2}{*}{\#g} & \multirow{2}{*}{\shortstack{avg $m$\\(max $m$)}} & \multirow{2}{*}{prep (s)} & \multirow{2}{*}{\shortstack{vars \\ (\%)}} & \multicolumn{2}{c|}{\#solved (avg time (s))} & \multirow{2}{*}{$\times$} \\ \cline{7-8}
&      &        &         &      &   & no safety & safety &  \\ \hline

\multirow{3}{*}{\rotatebox{90}{\shortstack{\textbf{Zebrafish}\\\textbf{}}}}& 1-3 & 15405 & 14 (210) & 0.003 & 87.5 & 15405 (0.011) & 15405 (0.011) & 1.2 \\
& 4-6 & 239 & 35 (156) & 0.007 & 29.5 & 239 (1.927) & 239 (0.095) & 14.6 \\
& 7-9 & 6 & 70 (158) & 0.015 & 15.3 & 4 (71.070) & 6 (0.645) & 244.3 \\
& 10+ & 1 & 81 (81) & 0.016 & 13.1 & 0 (-) & 1 (3.421) & 88.1 \\

\hline

\multirow{3}{*}{\rotatebox{90}{\shortstack{\textbf{Human}\\\textbf{}}}}& 1-3 & 10729 & 15 (331) & 0.003 & 78.9 & 10729 (0.020) & 10729 (0.014) & 1.4 \\
& 4-6 & 947 & 35 (163) & 0.008 & 24.7 & 944 (4.031) & 947 (0.147) & 19.0 \\
& 7-9 & 92 & 54 (105) & 0.012 & 14.1 & 63 (63.346) & 92 (3.924) & 116.6 \\
& 10+ & 12 & 80 (104) & 0.019 & 11.7 & 2 (189.233) & 9 (32.264) & 35.0 \\

\hline

\multirow{3}{*}{\rotatebox{90}{\shortstack{\textbf{Mouse}\\\textbf{}}}}& 1-3 & 12280 & 14 (128) & 0.003 & 84.4 & 12280 (0.014) & 12280 (0.011) & 1.3 \\
& 4-6 & 749 & 35 (126) & 0.007 & 27.4 & 749 (2.295) & 749 (0.121) & 15.0 \\
& 7-9 & 60 & 58 (231) & 0.015 & 17.5 & 43 (52.717) & 60 (4.955) & 73.3 \\
& 10+ & 9 & 79 (121) & 0.014 & 15.2 & 1 (262.851) & 7 (34.163) & 69.2 \\

\hline

\multirow{3}{*}{\rotatebox{90}{\shortstack{\textbf{SRR020730}\\\textbf{}}}}& 1-3 & 35069 & 9 (186) & 0.002 & 84.7 & 35069 (0.015) & 35069 (0.010) & 1.4 \\
& 4-6 & 4497 & 33 (171) & 0.007 & 20.4 & 4496 (2.043) & 4497 (0.124) & 12.7 \\
& 7-9 & 1008 & 52 (131) & 0.010 & 11.9 & 844 (42.962) & 1007 (1.692) & 117.1 \\
& 10+ & 296 & 75 (184) & 0.015 & 7.9 & 80 (63.971) & 289 (22.945) & 142.5 \\

\hline

\multirow{3}{*}{\rotatebox{90}{\shortstack{\textbf{Mouse}\\\textbf{ONT}}}}& 1-3 & 18527 & 11 (129) & 0.003 & 81.4 & 18527 (0.013) & 18527 (0.011) & 1.3 \\
& 4-6 & 3083 & 31 (387) & 0.007 & 30.2 & 3083 (0.203) & 3083 (0.041) & 4.8 \\
& 7-9 & 762 & 48 (157) & 0.010 & 18.1 & 755 (2.495) & 762 (0.180) & 38.7 \\
& 10+ & 442 & 87 (589) & 0.021 & 11.1 & 372 (18.194) & 437 (0.611) & 266.0 \\

\hline

\multirow{3}{*}{\rotatebox{90}{\shortstack{\textbf{Mouse}\\\textbf{ONT.tr}}}}& 1-3 & 18029 & 12 (131) & 0.003 & 81.1 & 18029 (0.011) & 18029 (0.011) & 1.1 \\
& 4-6 & 3028 & 32 (187) & 0.007 & 30.6 & 3028 (0.187) & 3028 (0.042) & 3.8 \\
& 7-9 & 803 & 50 (419) & 0.012 & 19.0 & 796 (2.344) & 803 (0.182) & 21.9 \\
& 10+ & 476 & 89 (549) & 0.022 & 11.5 & 403 (17.204) & 472 (0.982) & 162.1 \\

\hline

\multirow{3}{*}{\rotatebox{90}{\shortstack{\textbf{Mouse}\\\textbf{PacBio}}}}& 1-3 & 14256 & 14 (382) & 0.003 & 83.5 & 14256 (0.010) & 14256 (0.011) & 1.2 \\
& 4-6 & 1376 & 33 (187) & 0.006 & 33.3 & 1376 (0.196) & 1376 (0.040) & 5.2 \\
& 7-9 & 182 & 49 (159) & 0.009 & 20.5 & 181 (4.083) & 182 (0.144) & 48.1 \\
& 10+ & 63 & 109 (1178) & 0.026 & 13.3 & 53 (22.188) & 62 (2.557) & 210.3 \\

\hline
\end{tabular}
\end{center}
\end{table}

\begin{table}[p]
\caption{Experimental results of \ILPLQ. Safety is with respect to path covers of the subset $C$ of arcs whose weight is at least as large as the 25-th percentile of the weights of all arcs.
\label{table:lq-25}}
\begin{center}
\begin{tabular}{|r|r|r|r|r|r|r|r|r|}
\hline
& \multirow{2}{*}{$w$} & \multirow{2}{*}{\#g} & \multirow{2}{*}{\shortstack{avg $m$\\(max $m$)}} & \multirow{2}{*}{prep (s)} & \multirow{2}{*}{\shortstack{vars \\ (\%)}} & \multicolumn{2}{c|}{\#solved (avg time (s))} & \multirow{2}{*}{$\times$} \\ \cline{7-8}
&      &        &         &      &   & no safety & safety &  \\ \hline

\hline

\multirow{3}{*}{\rotatebox{90}{\shortstack{\textbf{Zebrafish}\\\textbf{}}}}& 1-3 & 15405 & 14 (210) & 0.003 & 83.1 & 15405 (0.031) & 15405 (0.023) & 1.5 \\
& 4-6 & 239 & 35 (156) & 0.007 & 18.1 & 224 (13.295) & 238 (3.061) & 15.9 \\
& 7-9 & 6 & 70 (158) & 0.011 & 8.4 & 0 (-) & 2 (62.840) & 4.8 \\
& 10+ & 1 & 81 (81) & - & - & 0 (-) & 0 (-) & - \\

\hline

\multirow{3}{*}{\rotatebox{90}{\shortstack{\textbf{Human}\\\textbf{}}}}& 1-3 & 10729 & 15 (331) & 0.003 & 74.6 & 10729 (0.046) & 10729 (0.031) & 1.8 \\
& 4-6 & 947 & 35 (163) & 0.007 & 16.3 & 862 (17.153) & 930 (6.334) & 12.9 \\
& 7-9 & 92 & 54 (105) & 0.011 & 9.6 & 4 (69.195) & 53 (71.207) & 29.1 \\
& 10+ & 12 & 80 (104) & - & - & 0 (-) & 0 (-) & - \\

\hline

\multirow{3}{*}{\rotatebox{90}{\shortstack{\textbf{Mouse}\\\textbf{}}}}& 1-3 & 12280 & 14 (128) & 0.003 & 80.0 & 12280 (0.035) & 12280 (0.026) & 1.6 \\
& 4-6 & 749 & 35 (126) & 0.007 & 16.9 & 696 (14.851) & 744 (5.917) & 12.4 \\
& 7-9 & 60 & 58 (231) & 0.015 & 9.9 & 0 (-) & 16 (64.253) & 121.9 \\
& 10+ & 9 & 79 (121) & - & - & 0 (-) & 0 (-) & - \\

\hline

\multirow{3}{*}{\rotatebox{90}{\shortstack{\textbf{SRR020730}\\\textbf{}}}}& 1-3 & 35069 & 9 (186) & 0.002 & 83.2 & 35069 (0.026) & 35069 (0.017) & 1.7 \\
& 4-6 & 4497 & 33 (171) & 0.007 & 15.5 & 4428 (13.082) & 4494 (1.275) & 16.7 \\
& 7-9 & 1008 & 52 (131) & 0.011 & 8.9 & 269 (127.595) & 899 (25.515) & 82.7 \\
& 10+ & 296 & 75 (184) & 0.015 & 6.8 & 1 (7.363) & 104 (61.880) & 68.0 \\

\hline

\multirow{3}{*}{\rotatebox{90}{\shortstack{\textbf{Mouse}\\\textbf{ONT}}}}& 1-3 & 18527 & 11 (129) & 0.003 & 75.9 & 18527 (0.020) & 18527 (0.017) & 1.6 \\
& 4-6 & 3083 & 31 (387) & 0.007 & 22.5 & 3077 (1.448) & 3082 (0.310) & 7.6 \\
& 7-9 & 762 & 48 (157) & 0.010 & 13.5 & 693 (15.761) & 754 (1.602) & 116.2 \\
& 10+ & 442 & 87 (589) & 0.018 & 8.5 & 222 (37.135) & 405 (11.143) & 301.0 \\

\hline

\multirow{3}{*}{\rotatebox{90}{\shortstack{\textbf{Mouse}\\\textbf{ONT.tr}}}}& 1-3 & 18029 & 12 (131) & 0.003 & 74.7 & 18029 (0.020) & 18029 (0.017) & 1.5 \\
& 4-6 & 3028 & 32 (187) & 0.007 & 21.7 & 3012 (1.693) & 3022 (0.426) & 7.1 \\
& 7-9 & 803 & 50 (419) & 0.011 & 13.5 & 727 (14.958) & 794 (2.087) & 76.2 \\
& 10+ & 476 & 89 (549) & 0.021 & 8.8 & 251 (32.631) & 423 (8.422) & 183.2 \\

\hline

\multirow{3}{*}{\rotatebox{90}{\shortstack{\textbf{Mouse}\\\textbf{PacBio}}}}& 1-3 & 14256 & 14 (382) & 0.003 & 74.4 & 14256 (0.034) & 14256 (0.025) & 1.7 \\
& 4-6 & 1376 & 33 (187) & 0.007 & 22.1 & 1356 (2.622) & 1372 (1.189) & 12.3 \\
& 7-9 & 182 & 49 (159) & 0.010 & 14.0 & 146 (37.428) & 174 (6.627) & 137.8 \\
& 10+ & 63 & 109 (1178) & 0.015 & 10.3 & 15 (110.413) & 46 (17.775) & 370.6 \\

\hline

\end{tabular}
\end{center}
\end{table}

\newpage

\bibliography{main}

\appendix

\newpage

\section{Safe sequences for path covers of a subset of nodes}
\label{sec:subset-path-covers}

In this section we describe how to generalize our results of \Cref{sec:safe-sequences-dominators} for safe sequences with respect to path covers where only some nodes have to be covered. More specifically, given an $s$-$t$ DAG $G=(N,A)$ and a set of nodes $C\subseteq N$, we say that a \emph{$C$-path cover} of $G$ is a set of paths where each node of $C$ is in some path of the cover. In particular, if $C=\emptyset$ then the empty set of paths is a $C$-path cover and moreover no sequence is safe. On the other hand, if $C=N$ then we are in the case of standard path covers and we get \Cref{thm:safe_sequences_char}.

The characterization of safe sequences given in \Cref{thm:safe_sequences_char} easily generalizes to arbitrary sets $C\subseteq N$, and hence dominators still prove to be an appropriate tool to reason about safety on $C$-path covers.

\begin{theorem}[Characterization of safe sequences for subset-path covers]
\label{thm:safe_sequences_char_subset}
    Let $G=(N,A)$ be an $s$-$t$ DAG, $C\subseteq N$ a set of nodes, and $X$ be a sequence of nodes. The following statements are equivalent:
    \begin{enumerate}
        \item $X$ is safe for $C$-path covers.
        \item There is a node $u \in C$ such that every $s$-$t$ path covering $u$ contains $X$.
        \item There is a node $v \in C$ such that $X$ is contained in the sequence obtained as the merge of the sequences of $s$-$v$ and $v$-$t$ cutnodes.
    \end{enumerate}
\end{theorem}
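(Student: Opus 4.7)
The plan is to lift the proof of \Cref{thm:safe_sequences_char} almost verbatim, with the single adjustment that the distinguished witnessing node must be required to lie in $C$, since only nodes in $C$ must be covered by a $C$-path cover. Each of the three implications carries over with no new graph-theoretic input, because the relationship between a sequence $X$, the $s$-$v$ and $v$-$t$ cutnodes, and the $s$-$t$ paths through $v$ is intrinsic to $v$ and does not depend on which subset of nodes we require to be covered.

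For $(1 \Rightarrow 2)$, I would argue by contrapositive. Assuming that for every $u \in C$ there exists an $s$-$t$ path $p_u$ through $u$ that does not contain $X$ as a subsequence, the collection $\{p_u : u \in C\}$ is itself a $C$-path cover in which no path contains $X$, contradicting the safety of $X$. For $(2 \Rightarrow 3)$, I would simply take $v = u$: every $s$-$t$ path through $u$ contains, in order, all $s$-$u$ cutnodes followed by all $u$-$t$ cutnodes, and conversely any node appearing in every $s$-$t$ path through $u$ must be such a cutnode (or $u$ itself). Hence if $X$ lies in all such paths, its nodes must be cutnodes appearing in the right order, so $X$ is a subsequence of $\ext{u}$. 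For $(3 \Rightarrow 1)$, I would take an arbitrary $C$-path cover $P$; since $v \in C$, some $p \in P$ covers $v$, and by definition $p$ contains $\ext{v}$, and therefore $X$, as a subsequence.

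The main ``obstacle'' is essentially bookkeeping: making sure that the witnessing node in $(2)$ and $(3)$ is constrained to $C$ throughout, and handling the edge case $C = \emptyset$ coherently. In the latter case the empty set of paths is a valid $C$-path cover, so no nonempty sequence is safe and $(1)$ fails; likewise $(2)$ and $(3)$ fail vacuously because there is no $u,v \in C$ to witness them, keeping the equivalence consistent. As already observed in the main body, setting $C = N$ recovers \Cref{thm:safe_sequences_char} exactly.
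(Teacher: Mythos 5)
Your proof is correct and follows essentially the same route as the paper, which simply observes that the proof of \Cref{thm:safe_sequences_char} carries over verbatim once the witnessing node is restricted to $C$ (your contrapositive for $(1 \Rightarrow 2)$, the choice $v=u$ for $(2 \Rightarrow 3)$, and the covering argument for $(3 \Rightarrow 1)$ match the original proof step for step). Your explicit handling of the $C=\emptyset$ edge case and the remark that every node of $X$ must itself be an $s$-$u$ or $u$-$t$ cutnode are welcome clarifications but do not constitute a different approach.
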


Recall that in the characterization of maximal safe sequences (\Cref{thm:cores-nodes}) we assume to be given a compressed graph. This technicality was mainly introduced for the sake of clarity in characterizing maximal safe sequences. Indeed, if the graph is not compressed the leaf characterization fails. For example, subdividing each arc in the graph of \Cref{fig:subset-cover} results in a graph whose dominator trees have no leaves in common.

We could then expect that compressing the graph while somehow preserving the restrictions imposed by $C$ would allow for a direct analogue of \Cref{thm:cores-nodes} for $C$-path covers. We claim that this approach is not enough to get an optimal time algorithm. Let $P$ be the set of maximal unitary paths of a graph $G$ and let $G'=(N',A')$ be $G$ after compression. Clearly, there is a correspondence between $N'$ and $P$. The set $C' := \{ v \in N' \mid \text{there is a node in $C$ that got compressed into $v$} \}$ is consistent with $C$, i.e., the instances $G,C$ and $G',C'$ have essentially the same set of path covers. Observe now in \Cref{fig:subset-cover} that $C'=C$ since $G$ is already compressed, but no \emph{$C$-leaf}, i.e., a node in $C$ with no descendants in $C$, common to both dominator trees identifies the maximal safe sequence $s,a,f,t$ via paths to the root in the dominator trees. The key observation is that the situation depicted by the paths $af$ and $fa$ in the dominator trees of \Cref{fig:subset-cover} is analogous to that of paths in the $s$- and $t$-dominator trees that correspond to nontrivial unitary paths in $G$.

Say that a path in a tree is \emph{univocal} if each of its nodes but the deepest one has exactly one child; this definition captures exactly the paths that arose in the discussion above. If $p$ is a univocal path in one dominator tree and there is a univocal path in the other dominator tree with the same set of nodes, then the sequence of nodes in that path must appear in reversed order with respect to $p$, since $G$ is acyclic. Thus, for ease of writing, we fix the orientation given by the $s$-dominator tree to syntactically describe univocal paths appearing in both trees (e.g., $af$ is a univocal path common to both trees in \Cref{fig:subset-cover}).
Note that identifying all the univocal paths in the dominator trees can be done in linear time and that the extension of any two nodes in a univocal path gives the same sequence. Let us build the dominator trees of $G$, where we add abstract blue arcs representing the dominance relation restricted to $C$. Let us call this tree by the \emph{blue dominator tree}; essentially, this tree encodes the domination-relation restricted to $C$. With this construct we can now argue about the maximality of safe sequences for $C$-path covers, similarly to \Cref{thm:cores-nodes}.

\begin{figure}[t]
    \centering
    \includegraphics[width=1\linewidth]{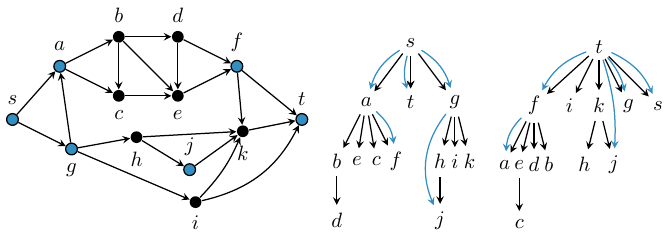}
    \caption{An $s$-$t$ DAG with the nodes belonging to $C$ in blue ($s$ and $t$ are included without loss of generality). On its right, the dominator trees of $G$ are drawn with additional blue arcs to indicate the immediate-dominator relation restricted to $C$. The only blue node not having blue descendants in both trees is the node $j$. Nodes $a$ and $f$ identify the same sequence via extensions. The path $af$ is univocal in the blue $s$-dominator tree and has no descendants in $C$, and the same in the blue $t$-dominator tree. The path $gj$ is univocal only in the blue $s$-dominator tree, the path $saf$ is not since $s$ has more than one blue children.}
    \label{fig:subset-cover}
\end{figure}

Let $u \in C$ be a node. We aim to show that $\ext{u}$ is a maximal safe sequence for $C$-path covers if and only if there is a univocal blue path appearing in both dominator trees that contains $u$ where its deepest node in the $s$-dominator tree as well as its deepest node in the $t$-dominator tree has no descendants in $C$. First, if a blue univocal path appears in both trees and its deepest nodes do not have descendants in $C$ in either tree, then the extension of any of its nodes is clearly maximal. Note that $\ext{.}$ is still computed over the dominator trees of $G$, this discussion serves the purpose of understanding which nodes should we extend.
To see the other direction we argue by contradiction. Suppose that $\ext{u}$ is maximally safe but there is no univocal blue path containing $u$ that appears in both dominator trees and whose deepest nodes have no descendants in $C$.

First, observe that $u$ cannot be a $C$-leaf in both trees, otherwise $u$ itself would be a univocal blue path trivially containing itself and without any descendants in $C$ in either tree.
Therefore, and without loss of generality, there is a node $v \in C$ such that $uv$ is a blue arc in the $s$-dominator tree. Let $k \in \mathbb{N}^+$ be such that $u$ is the $k$-th ancestor of $v$ in the $s$-dominator tree.
If there is a node $v' \in C$ distinct from $v$ such that $uv'$ is a blue arc, then it cannot be the case that both $v$ and $v'$ $t$-dominate $u$. To see why, suppose that they both $t$-dominate $u$ and say that $v$ is the closest to $t$ in the $t$-dominator tree. Then we can take a $u$-$v$ path avoiding $v'$ and extend it to $t$ also avoiding $v$', thus contradicting the fact that $v'$ $t$-dominates $u$. Therefore, we can apply \Cref{lem:st-dom-immediate-relation} to $u$ and $v$ (or $u$ and $v'$, but say to $v$ without loss of generality) to conclude that $w=dom_t(u,k)$ \emph{strictly} $t$-dominates $v$. We now want to show that the extension of $v$ properly contains the extension of $u$. It suffices to show that no node in $C$ exists between $w$ and $u$ in the $t$-dominator tree, since we can then apply the same reasoning described in the proof of \Cref{thm:cores-arcs} to conclude that $\ext{v}$ properly contains $\ext{u}$. Indeed, if there is a node $x \in C$ between $w$ and $u$ in the $t$-dominator tree, then $x$ must be between $u$ and $v$ in the $s$-dominator tree, for otherwise we can take a $u$-$v$ path avoiding $x$ and prolong it with a $v$-$t$ path also avoiding $x$, which results in a $u$-$t$ path avoiding $x$ and contradicts the fact that $x$ $t$-dominates $u$ (see \Cref{fig:proof-sketch}).

It remains to analyze the case where $uv$ is a univocal path in both dominator trees. By our assumption, in at least one of the dominator trees there is a descendant in $C$ of $v$ (resp. $u$) in the $s$-dominator tree (resp. $t$-dominator tree). Suppose there is a node $x \in C$ such that $uvx$ is a univocal path in the $s$-dominator tree. If $xvu$ is also univocal in the $t$-dominator tree, then we can repeat this process until we obtain a maximal univocal blue path $p$ appearing in both dominator trees that contains $u$. By our assumption, the deepest node of $p$ in at least one of the trees has a descendant in $C$. Without loss of generality, suppose this node is in the $s$-dominator tree and call it $z$. Let $k \in \mathbb{N}^+$ be such that $u$ is the $k$-th ancestor of $z$ in the $s$-dominator tree ($k \geq 2$). Now we apply \Cref{lem:st-dom-immediate-relation} to $u$ and $z$ and conclude that $w=dom_t(u,k)$ $t$-dominates $z$. By the maximality of $p$, it cannot be the case that $w = z$, $z$ has a unique blue child in the $t$-dominator tree, and the deepest node of $p$ in the $s$-dominator tree has $z$ as its unique blue child, otherwise adding $z$ to $p$ would yield a univocal path common to both trees that properly contains $p$. It is enough to observe that if $w=z$ and $z$ has a unique blue child in the $t$-dominator tree, then the deepest node of $p$ in the $s$-dominator tree has a blue child other than $z$, say $z'$. Then, we are in the same case as in the previous paragraph and we can conclude that $\ext{z'}$ properly contains $\ext{u}$ (or the extension of any node in $p$, as they all yield the same extension). 
These are all the possible cases and the argument is finished.

We are now ready to prove the main theorem concerning safe sequences in $C$-path covers.

\begin{theorem}[Maximal safe sequence enumeration for $C$-path covers]
\label{thm:representation-optimal-enumeration-C}
    Let $G$ be an $s$-$t$ DAG with $n$ nodes and $m$ arcs, and let $C \subseteq N$. The following hold.
    \begin{enumerate}
        \item There is an $O(m+o)$ time algorithm outputting the set of all maximal safe sequences for $C$-path covers with no duplicates, where $o$ denotes the total length of all the maximal safe sequences.
        
        \item There is an $O(n)$-size representation of all the maximal safe sequences of $G$ for $C$-path covers that can be built in $O(m+n)$ time.
    \end{enumerate}
\end{theorem}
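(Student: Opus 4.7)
The plan is to leverage the characterization from the discussion preceding the theorem, which says that $\ext{u}$ is a maximal safe sequence for $C$-path covers if and only if there exists a univocal blue path $p$ in both the $s$- and $t$-dominator tree that contains $u$, such that the deepest node of $p$ in each of the two trees has no descendants in $C$. Based on this, the algorithm proceeds in four phases: (i) compute the $s$- and $t$-dominator trees of $G$; (ii) annotate each node with its closest proper ancestor in $C$ in each tree, giving us the blue arcs; (iii) find the maximal univocal blue paths that are common to both trees; (iv) for each such path whose deepest endpoints in both trees have no $C$-descendant, pick one representative node $u$ on the path and output $\ext{u}$.

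First I would build the two dominator trees in $O(m+n)$ time by invoking any linear-time dominator tree algorithm (e.g.\ \cite{alstrup1999dominators,buchsbaum2008linear,fraczak2013finding}). In a single traversal of each dominator tree, for every node $v$ I record $cAnc_s(v)$, the closest proper ancestor of $v$ that lies in $C$ (and analogously $cAnc_t(v)$). The blue in-degree (resp.\ out-degree) of a node $v$ in the blue $s$-dominator tree is the number of $C$-nodes whose $cAnc_s$ equals $v$ (resp.\ $1$ if $v$ has a blue parent). With this data, both blue dominator trees are encoded on top of the original dominator trees in $O(n)$ space, and marking whether a node has any $C$-descendant in each tree is a routine bottom-up sweep. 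The univocal blue paths in each tree are the maximal chains of blue edges $uv$ such that $v$ is the unique blue child of $u$; these can be found in linear time by walking down each tree and closing a chain whenever the blue out-degree is not exactly $1$ or the direction is broken. To pair up the univocal chains that appear in both trees, I would assign each node a pair of chain identifiers $(\sigma_s(v), \sigma_t(v))$ (one per tree) and intersect the chains by grouping nodes by this pair; within each group I keep only the maximal sub-chain that is univocal in both orientations. This whole preprocessing runs in $O(m+n)$ time and stores an $O(n)$ representation, establishing part 2.

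For part 1, I iterate through the common univocal blue chains and, for each one whose deepest node in the $s$-tree and deepest node in the $t$-tree are both $C$-leaves in their respective blue trees (which we have precomputed), I pick any node $u$ on the chain and output $\ext{u}$. By Remark~\ref{remark:extension-and-dominators}, $\ext{u}$ is obtained by walking from $u$ to $s$ in the $s$-dominator tree and from $u$ to $t$ in the $t$-dominator tree, so emitting it costs time proportional to its length. By the characterization, each maximal safe sequence is produced exactly once (all nodes on the same univocal chain yield the same extension, and distinct eligible chains yield distinct extensions because the chains' deepest endpoints differ), so the total time is $O(m+n+o) = O(m+o)$, using that $n = O(m)$ for connected graphs (and isolated nodes can be trivially handled).

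The main obstacle is the correct identification of the "common univocal blue paths" and verifying that the case analysis from the discussion preceding the theorem translates into a simple linear-time sweep. The delicate point is that univocality is defined with respect to the blue children relation within a single tree, so one must carefully align the orientations between the two trees (in the $s$-tree the chain reads top-down, in the $t$-tree it reads bottom-up) and ensure that the chain-boundary conditions (blue out-degree exactly one, deepest node has no $C$-descendant) are checked in the right tree. Once this bookkeeping is correct, everything else reduces to standard tree traversals and the bound $O(m+o)$ follows directly from the linearity of dominator-tree construction and of outputting the extensions.
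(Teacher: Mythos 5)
Your proposal follows essentially the same route as the paper's own proof: build the two dominator trees, overlay the blue (i.e.\ $C$-restricted) dominance relation, identify the univocal blue paths common to both trees whose deepest nodes have no descendants in $C$, and output the extension of one representative node per such path, with correctness and non-duplication delegated to the characterization established in the preceding discussion. The extra implementation detail you supply (chain identifiers, bottom-up sweeps for $C$-descendant marking) is a more explicit rendering of the same $O(m+o)$ algorithm and $O(n)$ representation.
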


\begin{proof}
    
    \textbf{1:} Build the dominator trees of $G$. Next, build the blue dominator trees of $G$ with respect to $C$. With respect to the blue $s$-dominator tree, define $P_s$ as the set of all univocal paths whose deepest nodes do not have any descendant in $C$. Define $P_t$ analogously. The set $P_s \cap P_t$ thus contains the maximal univocal paths with respect to \emph{both} blue dominator trees, i.e., adding any node to any path in this set, makes the path not univocal in some blue dominator tree. Pick an arbitrary node from a path in $P_s \cap P_t$ and output its extension. It follows from the discussion above that this extension results in a maximal safe sequence. Moreover, the fact that no maximal safe sequence is reported twice follows from the maximality of the paths in $P_s \cap P_t$. This procedure runs in time equal to the length of all the maximal safe sequences plus the time required to build two dominator trees and to compute the blue trees, i.e., $O(m+o)$.

    \textbf{2:} From point 1 it follows that the extension of a node $v$ belonging to a path in $P_s \cap P_t$ is a maximal safe sequence for $C$-path covers, as well as the converse. Thus, the dominator trees of $G$ together with the blue arcs and the set $P_s \cap P_t$ encode the necessary information to represent all the maximal safe sequences. All these computations can be performed in $O(m+n)$ time, and the dominator trees together with the blue arcs and the set $P_s \cap P_t$ require only $O(n)$ space.\hfill $\square$
\end{proof}

Conceptually, for a graph $G$ and set $C$, we can consider the $s$-$t$ DAG $H_G=(C \cup \{s,t\},A)$ where $A := \{ uv \mid u,v\in C \; \wedge \text{ there is a $u$-$v$ path in $G$ without nodes in $C$ but $u$ and $v$}\}$. If we now consider $G$ to be the graph in \Cref{fig:subset-cover}, we see that the dominator trees of $H_G$ are exactly the blue trees shown on top of the dominator trees of $G$. Now we can prove a direct analogue of \Cref{thm:cores-nodes} for $C$-path covers, as we can compress $H_G$ into $H'_G$ and apply \Cref{lem:compressed-graph}. Then, to find the maximal safe sequences, we simply output $\ext{u}$ in the dominator trees of $G$ for every node $u$ in the node set of $H'_G$ such that $u$ is a common leaf in the dominator trees $H'_G$. However, for algorithms, we cannot afford to build $H_G$ as it involves the computation of the reachability relation of a graph.

\section{Safe sequences in arc-path covers}\label{sec:arc-path-covers}

In this section we present analogous results to those in \Cref{sec:safe-sequences-dominators} for safe sequences of arcs in path covers of the arcs. We begin by redefining some concepts.

We work on $s$-$t$ DAGs with node set $N$ and arc set $A$ and allow parallel arcs, so that in fact are working on directed acyclic multigraphs. For $X \subseteq A$ we denote a \emph{sequence of arcs} $X$ through arcs $a_1,a_2,\dots,a_k$ as $X=a_1,\dots,a_k$ if each arc in the sequence reaches the one after it; a path is now denoted as a contiguous sequence of arcs. If $Y$ is a sequence of arcs and each arc of $X$ is contained in $Y$ we say that $X$ is a \emph{subsequence} of $Y$, or that $X$ is \emph{contained} in $Y$. Two paths are distinct if they differ in at least one arc, and two path cover are distinct if they differ in at least one path. A sequence of arcs is safe for a DAG $G$ if it is a subsequence of some $s$-$t$ path in every path cover of $G$.

We say that $xy$ is a \emph{$u$-$v$ bridge} if there is no $u$-$v$ path in $G$ without $xy$; equivalently, all $u$-$v$ paths in $G$ contain the arc $xy$. The concepts introduced for dominators in~\Cref{sec:notation-preliminaries} can be defined for arcs: given an arc $uv$, we say that $xy$ $s$-dominates $uv$ iff every $s$-$u$ path contains $xy$, and $zw$ $t$-dominates $uv$ iff every $v$-$t$ path contains $zw$. We can also refine the notion of extension for arcs by saying that $\ext{uv}$ consists of the sequence of $s$-$u$ bridges concatenated with $uv$, followed by the sequence of $v$-$t$ bridges.

A \emph{unitary path} with respect to arc-path covers is a path where each node that is not the first nor the last has indegree and outdegree equal to one. The \emph{arc-compressed graph} of $G$ is a graph where each maximal unitary path is replaced by an arc from the first node to the last node of the path. Note that in an arc-compressed graph, no node has both indegree and outdegree equal to one. Note also that we might introduce parallel arcs during compression if a transitive arc exists from the first to the last node of the unitary path, so indeed it is important that we allow multigraphs to begin with. Arc-compressing a graph can be done in linear time, similarly to point 1 of \Cref{lem:compressed-graph}.
The \emph{line graph} of a directed graph $G=(V,E)$, denoted by $L(G)$, is a graph where each node corresponds to an arc of $G$ and where there is an arc from node $uv$ to node $xy$ if $v=x$; essentially, arcs of $L(G)$ encode paths of two arcs in $G$.

With this, we can already characterize safe sequences of arcs.

\begin{theorem}
\label{thm:safe_sequences_char_arc}
    Let $G=(N,A)$ be a DAG and $X$ a sequence of arcs. The following statements are equivalent:
    \begin{enumerate}
        \item $X$ is safe for path covers.
        \item There is an arc $uv \in A$ such that every $s$-$t$ path covering $uv$ contains $X$.
        \item There is an arc $xy \in A$ such that $X$ is contained in the sequence resulting from the concatenation of the sequences of $s$-$x$ and $y$-$t$ bridges.
    \end{enumerate}
\end{theorem}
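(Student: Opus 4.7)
The plan is to mirror the proof of \Cref{thm:safe_sequences_char} essentially verbatim, replacing nodes with arcs and cutnodes with bridges. I would prove the three implications cyclically: $(1 \Rightarrow 2)$, $(2 \Rightarrow 3)$, and $(3 \Rightarrow 1)$.

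For $(1 \Rightarrow 2)$ I would argue by contrapositive. Suppose that for every arc $uv \in A$ there is an $s$-$t$ path $p_{uv}$ covering $uv$ that does not contain $X$ as a subsequence of arcs. Then the collection $\{p_{uv} : uv \in A\}$ is, by construction, an arc-path cover of $G$ in which no path contains $X$; this contradicts the safety of $X$. For $(2 \Rightarrow 3)$, take the witnessing arc $uv$ from (2). Any $s$-$t$ path covering $uv$ must, by definition of bridges, contain all $s$-$u$ bridges (in the unique order imposed by acyclicity), then the arc $uv$, then all $v$-$t$ bridges, which together are exactly $\ext{uv}$. Since by hypothesis $X$ is a subsequence of every such $s$-$t$ path, and all of them share the common super-sequence $\ext{uv}$, $X$ must be a subsequence of $\ext{uv}$. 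Setting $xy := uv$ gives (3).

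For $(3 \Rightarrow 1)$, let $X$ be contained in $\ext{xy}$ for some arc $xy \in A$, and let $P$ be any arc-path cover of $G$. Since $xy$ is covered, there is some $p \in P$ with $xy \in p$. By definition of bridges, $p$ traverses every $s$-$x$ bridge, the arc $xy$, and every $y$-$t$ bridge, so $\ext{xy}$ is a subsequence of $p$. By transitivity of the subsequence relation, $X$ is a subsequence of $p$, witnessing safety.

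The only real obstacle is a small technical one: \Cref{sec:arc-path-covers} works with multigraphs, so I would briefly verify that bridges are well-defined in this setting (they are, since the definition only asks whether a specific arc lies on \emph{every} $u$-$v$ path, regardless of arc multiplicity) and that concatenating a sequence of $s$-$x$ bridges, the arc $xy$, and a sequence of $y$-$t$ bridges yields a valid sequence in the sense of \Cref{sec:arc-path-covers}, i.e., each arc reaches the next. Acyclicity of $G$ guarantees both that the bridges admit a canonical linear order (bridges are pairwise comparable via reachability along any $s$-$t$ path through $xy$) and that the resulting concatenation is a well-defined sequence of arcs. Beyond this bookkeeping, the proof is a direct arc-translation of the node-based argument.
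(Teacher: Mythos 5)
Your proof is correct, but it takes a genuinely different route from the paper. The paper proves \Cref{thm:safe_sequences_char_arc} in one line by applying \Cref{thm:safe_sequences_char} to the line graph $L(G)$, so that arcs of $G$ become nodes, arc-path covers become node-path covers, and bridges become cutnodes; your proof instead re-runs the three cyclic implications directly on $G$, with arcs in place of nodes and bridges in place of cutnodes. Both work. Your direct argument is self-contained and sidesteps the bookkeeping the reduction quietly assumes (that $L(G)$ must be augmented with a super-source and super-sink to be an $s$-$t$ DAG, that its path covers correspond exactly to arc-path covers of $G$, and that its cutnodes are exactly the bridges of $G$); the paper's reduction is shorter and reuses the node theorem wholesale, and the line-graph viewpoint pays off again later in \Cref{sec:arc-path-covers} (e.g.\ in \Cref{lemma:atm1_child_arcs}). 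One remark on your $(2 \Rightarrow 3)$ step: the inference ``$X$ is in every path through $uv$ and every such path contains $\ext{uv}$, hence $X \subseteq \ext{uv}$'' is not literally a deduction from those two facts alone --- you additionally need that any arc of $X$ that is not an $s$-$u$ bridge, not $uv$, and not a $v$-$t$ bridge can be avoided by some $s$-$t$ path through $uv$ (splice in an $s$-$u$ or $v$-$t$ path missing that arc), which then contradicts (2). This is the same level of terseness as the paper's own proof of the node version, so it is a presentational point rather than a gap, but it is worth spelling out.
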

\begin{proof}
    Apply \Cref{thm:safe_sequences_char} to $L(G)$.\hfill $\square$
\end{proof}

\begin{lemma}[Characterization of arc-dominators]
\label{lemma:arc_dominator_char}
    Let $G=(N,A)$ be an $s$-$t$ DAG. An arc $uv \in A$ is an $s$-dominator (resp. $t$-dominator) if and only if $d_v^+ \geq 1$ and $d_v^-= 1$ (resp. $d_u^- \geq 1$ and $d_u^+ = 1$).
    
\end{lemma}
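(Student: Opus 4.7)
The plan is to prove both directions directly, mirroring the proof of Lemma 7 at the arc level. I focus on the $s$-dominator case; the $t$-dominator case follows by symmetry, obtained by reversing all arcs of $G$. An appealing alternative would be to apply Lemma 7 to the line graph $L(G)$ equipped with virtual source and sink nodes (as was done for the safe-sequence characterization in Theorem 15), but this reduction has a subtle edge case when the dominated node in $L(G)$ is the virtual sink (corresponding to $v = t$), so the direct argument is cleaner.

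For $(\Leftarrow)$, assuming $d^-_v = 1$ and $d^+_v \geq 1$, I would pick any out-arc $vy$ of $v$ and observe that since $u$ is the unique in-neighbor of $v$, every $s$-$v$ path must end with the arc $uv$. Hence every $s$-$v$ path contains $uv$, and since $uv \neq vy$ (as $G$ is acyclic), $uv$ strictly $s$-dominates $vy$, making $uv$ an $s$-dominator. For $(\Rightarrow)$, assuming $uv$ is an $s$-dominator, there is an arc $xy \neq uv$ such that every $s$-$x$ path contains $uv$. To prove $d^-_v = 1$, I would suppose toward a contradiction that $v$ has another in-neighbor $u' \neq u$: extending any $s$-$u'$ path by the arc $u'v$ gives an $s$-$v$ path $P$ avoiding $uv$; if $v \neq x$, concatenating $P$ with the $v$-$x$ suffix of any $s$-$x$ path through $uv$ (such a suffix cannot revisit $uv$ since $G$ is acyclic) yields an $s$-$x$ path avoiding $uv$, a contradiction, and if $v = x$ then $P$ itself is the contradicting path. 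To prove $d^+_v \geq 1$, either $v = x$ and then $xy$ is an out-arc of $v$, or $v \neq x$ and any $s$-$x$ path through $uv$ must continue from $v$ via some out-arc to reach $x$.

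The main subtlety lies in the $d^+_v \geq 1$ half of the forward direction: it precisely rules out the degenerate case $v = t$, where $d^+_v = 0$ and no other arc can be dominated by $uv$ (no arc $xy$ has $x$ reachable from $v = t$). This is exactly the pitfall one would hit when trying to import the result from Lemma 7 via $L(G)$, since the virtual sink attached to arcs entering $t$ would spuriously appear to be dominated by $uv$. Beyond this edge case, the argument is routine combinatorial reasoning about paths in the DAG.
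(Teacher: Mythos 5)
Your proof is correct and follows essentially the same route as the paper's: the backward direction observes that every $s$-$v$ path must end with the arc $uv$ when $u$ is the unique in-neighbor of $v$, and the forward direction splices an $s$-$v$ path entering through a second in-neighbor $u'\neq u$ (which avoids $uv$ by acyclicity) onto the $v$-$x$ suffix of a path to the dominated arc, with the $d^+_v\geq 1$ condition ruling out the degenerate case $v=t$. The paper phrases the forward direction contrapositively rather than by contradiction, but the content is identical.
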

\begin{proof}
    Without loss of generality we prove the statement for $s$-dominators.
    
    $(\Leftarrow)$ If the unique in-neighbor of $v$ is $u$ and there exists an arc $vw$, then $uv$ is the immediate dominator of $vw$. In fact, $uv$ is the immediate dominator of every outgoing arc from $v$.
    
    $(\Rightarrow)$ Suppose for a contradiction that $d_v^+ < 1$ or $d_v^- \neq 1$. If $d_v^+ = 0$ then $v=t$ and so $uv$ is not a dominator, so suppose $v$ has at least one out-neighbor. If $v$ has an in-neighbor $z \neq u$ then there is an $s$-$z$ path avoiding $uv$, for otherwise there would be a cycle in $G$ through $v$. Therefore $uv$ does not dominate any of the outgoing arcs from $v$, and thus it does not dominate any of the arcs it reaches: to any path containing $uv$ we change the prefix of the path until $v$ with an $s$-$v$ path through $z$ avoiding $uv$.\hfill $\square$
\end{proof}

We can also get the arc-version analogue of \Cref{lemma:no_redundancy_nodes} by considering the line graph. However, a stronger property holds for arc-compressed graphs, which we prove next.

\begin{lemma}
\label{lemma:no_redundancy_arcs}
    Let $G=(N,A)$ be an arc compressed $s$-DAG. Then every $s$-dominator $uv \in A$ immediately dominates at least two of the outgoing arcs from $v$.
\end{lemma}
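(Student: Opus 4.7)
The plan is to combine the characterization of arc-dominators (the preceding \Cref{lemma:arc_dominator_char}) with the defining structural property of arc-compressed graphs, namely that no node has both indegree and outdegree equal to one. Since $uv$ is given to be an $s$-dominator, \Cref{lemma:arc_dominator_char} hands us directly that $d_v^- = 1$ and $d_v^+ \geq 1$. The condition $d_v^- = 1$ means $u$ is the only in-neighbor of $v$, so every $s$-$v$ walk must use the arc $uv$. Consequently, for every out-neighbor $w$ of $v$, every $s$-$vw$ walk passes through $uv$, i.e., $uv$ $s$-dominates every outgoing arc from $v$.

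Next I would argue that in fact $uv$ is the \emph{immediate} $s$-dominator of every such outgoing arc $vw$. The dominators of any arc form a totally ordered chain under the ``dominates'' relation, and since $uv$ is the last arc traversed before $vw$ in any $s$-$vw$ walk, no arc strictly closer to $vw$ in this chain can exist other than $uv$ itself; therefore $uv$ is the immediate $s$-dominator of each outgoing arc from $v$.

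The remaining (and only nontrivial) step is to show $d_v^+ \geq 2$, so that ``at least two outgoing arcs'' is attained. Here I would rule out the boundary cases first: from $d_v^- = 1 > 0$ we get $v \neq s$, and from $d_v^+ \geq 1 > 0$ we get $v \neq t$, so $v$ is an internal node. Now invoke the arc-compression property stated right after the definition of arc-compressed graph: no node has both indegree and outdegree equal to one. Since $d_v^- = 1$, we conclude $d_v^+ \neq 1$, and combined with $d_v^+ \geq 1$ this forces $d_v^+ \geq 2$, finishing the proof.

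The main (and really only) obstacle is confirming that $uv$ is truly the \emph{immediate} dominator of each outgoing arc, since the statement is about immediate domination rather than domination; once one sees that $uv$ is ``adjacent'' in every $s$-$vw$ walk and appeals to the chain structure of dominators, everything else is a short degree-counting argument against the arc-compressed hypothesis.
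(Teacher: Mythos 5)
Your proof is correct and follows essentially the same route as the paper's: apply \Cref{lemma:arc_dominator_char} to get $d_v^-=1$, conclude that $uv$ immediately dominates every arc leaving $v$, and use the no-indegree-one-and-outdegree-one property of arc-compressed graphs to force $d_v^+\geq 2$. You are merely more explicit about why domination is \emph{immediate} (the paper delegates this to the $(\Leftarrow)$ direction of \Cref{lemma:arc_dominator_char}), and your degree count correctly targets $d_v^+$ where the paper's proof writes $d_u^+$, apparently a typo.
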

\begin{proof}
    Let $uv$ be an $s$-dominator. By \Cref{lemma:arc_dominator_char} we have $d^-_v=1$, and so $uv$ immediately dominates each outgoing arc from $v$. Since $G$ is compressed we must have $d_u^+ \geq 2$, proving the statement.\hfill $\square$
\end{proof}

\begin{lemma}
\label{lemma:atm1_child_arcs}
    Let $G=(N,A)$ be an $s$-$t$ DAG and let $uv,xy \in A$ be arcs. If $uv$ immediately $s$-dominates $xy$, then the immediate $t$-dominator of $uv$ $t$-dominates $xy$.
\end{lemma}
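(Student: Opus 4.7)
The plan is to derive this lemma as a direct arc-analogue of \Cref{lem:st-dom-immediate-relation} with $k=1$, by transferring everything to the line graph $L(G)$ and applying that result there.

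First I would construct an $s'$-$t'$ DAG $L^*(G)$ from $L(G)$ by adding a super-source $s'$ with an arc to every node of $L(G)$ representing an out-arc of $s$, and a super-sink $t'$ reached by every node representing an in-arc of $t$. Under this construction, a node $\alpha$ of $L^*(G)$ (corresponding to an arc $\alpha = pq \in A$) $s'$-dominates another node $\beta$ (corresponding to an arc $\beta = rw \in A$) if and only if $\alpha$ $s$-dominates $\beta$ in the arc sense used in the statement, since $s'$-$\beta$ paths in $L^*(G)$ are in bijection with $s$-$r$ paths in $G$ that end by traversing the arc $\beta$. The analogous correspondence holds for $t'$-domination in $L^*(G)$ versus $t$-arc-domination in $G$. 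In particular, immediate $s$-domination on arcs coincides with immediate $s'$-domination on the corresponding nodes, and similarly for the $t$/$t'$ sides.

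Next I would invoke \Cref{lem:st-dom-immediate-relation} on $L^*(G)$ with $k=1$: if $uv$ is the immediate $s'$-dominator of $xy$ in $L^*(G)$, then $dom_{t'}(uv,1)$ $t'$-dominates $xy$. Translating back to $G$ through the correspondence above, this says exactly that the immediate $t$-dominator (in the arc sense) of $uv$ is a $t$-dominator of $xy$, which is the desired conclusion.

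The only subtlety I would verify is the boundary behavior when $u=s$ or $y=t$: in these cases the relevant $s'$- or $t'$-paths in $L^*(G)$ are the trivial ones through the added super-nodes, so the correspondence still holds without special casing, and the definition of immediate $t$-dominator of $uv$ continues to make sense (defaulting to a sentinel when $uv$ has no strict $t$-dominator, matching the $dom_t$ convention introduced in \Cref{sec:notation-preliminaries}). Once this is checked, the lemma is immediate from the node version already established in \Cref{lem:st-dom-immediate-relation}.
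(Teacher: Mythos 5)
Your proposal is correct and follows essentially the same route as the paper, which likewise proves this lemma by passing to the line graph $L(G)$ and applying \Cref{lem:st-dom-immediate-relation} with $k=1$. Your explicit addition of a super-source and super-sink to make the line graph an $s'$-$t'$ DAG is a detail the paper leaves implicit, and it is a reasonable (indeed slightly more careful) way to justify the transfer of the domination relations.
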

\begin{proof}
    Let $a$ denote the node corresponding to $uv$ in $L(G)$ and analogously $b$ to $xy$.
    Apply \Cref{lem:st-dom-immediate-relation} to $L(G)$, $a$, $b$, and $k=1$.\hfill $\square$
\end{proof}

In a compressed graph, whenever $uv$ immediately $s$-dominates $xy$ and $xy$ immediately $t$-dominates $uv$, then $v \neq x$. Otherwise, by \Cref{lemma:arc_dominator_char}, we would have $d^-_v = 1$ because $uv$ is an $s$-dominator and $d^+_v = 1$ because $xy$ is a $t$-dominator, which contradicts the assumption that the graph is compressed.

\begin{theorem}[Characterization of maximal safe sequences]
\label{thm:cores-arcs}
    Let $G=(N,A)$ be an arc-compressed $s$-$t$ DAG and let $uv \in A$ be an arc. Then $\ext{uv}$ is a maximal safe sequence of arcs if and only if $uv$ is a leaf in both the $s$- and $t$-dominator trees of G.
\end{theorem}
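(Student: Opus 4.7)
The plan is to mirror the structure of the proof of Theorem \ref{thm:cores-nodes} almost verbatim, swapping nodes for arcs and replacing the node-specific auxiliary lemmas with their arc-counterparts: Lemma \ref{lemma:arc_dominator_char} in place of Lemma \ref{lemma:node_dominator}, Lemma \ref{lemma:no_redundancy_arcs} in place of Lemma \ref{lemma:no_redundancy_nodes}, and Lemma \ref{lemma:atm1_child_arcs} in place of the $k=1$ instance of Lemma \ref{lem:st-dom-immediate-relation}. An arc-version of Remark \ref{remark:extension-and-dominators} also applies via the line graph, so $\ext{uv}$ can be read off as the root-to-$uv$ path in the $s$-dominator tree of arcs concatenated with the $uv$-to-root path in the $t$-dominator tree of arcs.

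For the ($\Leftarrow$) direction, if $uv$ is a leaf in both arc dominator trees, then no arc appears before $uv$ in the $s$-dominator path ending at $uv$ that could be prepended, nor after $uv$ in the $t$-dominator path, other than the ones already in $\ext{uv}$; by the arc-analogue of Remark \ref{remark:extension-and-dominators} and by Theorem \ref{thm:safe_sequences_char_arc}, no arc can be inserted into $\ext{uv}$ while preserving safety, so $\ext{uv}$ is maximally safe.

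For the ($\Rightarrow$) direction, suppose $\ext{uv}$ is maximally safe but without loss of generality $uv$ is not a leaf in the $s$-dominator tree. Pick an arc $xy$ immediately $s$-dominated by $uv$, and let $ab$ denote the immediate $t$-dominator of $uv$. I would then produce an arc $zw$ that is immediately $s$-dominated by $uv$ and strictly $t$-dominated by $ab$, split into two cases: if $xy \neq ab$, set $zw := xy$ and apply Lemma \ref{lemma:atm1_child_arcs} to conclude that $ab$ $t$-dominates $xy$, strictly since $xy \neq ab$; if $xy = ab$, use Lemma \ref{lemma:no_redundancy_arcs} (which guarantees at least two immediately $s$-dominated outgoing arcs at $v$) to pick a second arc $x'y' \neq xy$ immediately $s$-dominated by $uv$, set $zw := x'y'$, and again apply Lemma \ref{lemma:atm1_child_arcs} to get that $ab$ $t$-dominates $x'y'$ strictly (strictness follows since $x'y' \neq xy = ab$).

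Once such a $zw$ is isolated, the contradiction comes from comparing $\ext{uv}$ with $\ext{zw}$: decompose $\ext{uv}$ as the sequence of $s$-$u$ bridges followed by $uv$ and then by the sequence of $b$-$t$ bridges, and analogously for $\ext{zw}$; since $zw$ is a proper descendant of $uv$ in the $s$-dominator tree of arcs, the $s$-bridge prefix strictly grows, and since $zw$ is a descendant of $ab$ in the $t$-dominator tree of arcs, the $t$-bridge suffix is contained in that of $\ext{zw}$. Hence $\ext{uv}$ is a proper subsequence of the safe sequence $\ext{zw}$, contradicting maximality. The main obstacle is the $xy = ab$ edge case, which is precisely why the arc-compression hypothesis is invoked through Lemma \ref{lemma:no_redundancy_arcs} to produce the alternative immediately $s$-dominated arc.
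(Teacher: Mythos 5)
Your proposal is correct and follows essentially the same approach as the paper: both directions rest on the same ingredients (the arc analogue of the extension-as-dominator-tree-paths correspondence for $(\Leftarrow)$, and \Cref{lemma:atm1_child_arcs} plus \Cref{lemma:no_redundancy_arcs} to produce an immediately $s$-dominated arc strictly $t$-dominated by the immediate $t$-dominator of $uv$, whose extension then properly contains $\ext{uv}$). The paper phrases the $(\Rightarrow)$ direction slightly differently — taking two immediately dominated arcs upfront and arguing via their lowest common ancestor in the $t$-dominator tree — but this is only a cosmetic reorganization of the same case analysis you carried over from \Cref{thm:cores-nodes}.
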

\begin{proof}
    $(\Leftarrow)$ If $uv$ is a leaf in both dominator trees then the sequence that $uv$ identifies via paths to the roots of the dominator trees is maximal: no other arc can be added to this sequence by the equivalence of safe sequences and paths in the dominator trees given by \Cref{thm:safe_sequences_char_arc}.

    $(\Rightarrow)$ Suppose that $uv$ is not a leaf in the dominator tree of $s$. We apply \Cref{lemma:no_redundancy_arcs} to $G$ to get two arcs $s$-dominated by $uv$, $xy$ and $zw$. It suffices to show that the lowest common ancestor in the $t$-dominator tree between $uv$ and $xy$ or between $uv$ and $zw$ is the immediate $t$-dominator of $uv$. By \Cref{lemma:atm1_child_arcs}, we can assume that $xy$ is not the immediate $t$-dominator of $uv$. Let $a$ denote the lowest common ancestor in the $t$-dominator tree between $uv$ and $xy$. If $a$ is not the immediate $t$-dominator of $uv$ then it is an ancestor of it. Consider an $xy$-$t$ path avoiding the immediate dominator of $uv$ and consider a $uv$-$xy$ path avoiding the immediate $t$-dominator of $uv$; these paths together form a $uv$-$t$ path avoiding the immediate $t$-dominator of $uv$, a contradiction.\hfill $\square$
\end{proof}

Since we cannot get a direct analogues of \Cref{thm:mn-all-only-maximal} and \Cref{thm:representation-optimal-enumeration} via the line graph construction, we now present two theorems concerning arc-path covers whose proofs are completely analogous to \Cref{thm:mn-all-only-maximal} and \Cref{thm:representation-optimal-enumeration}. Essentially, we arc-compress the graph and then build its dominator trees with respect to the arc-dominance relation.

\begin{theorem}
\label{thm:mn-all-only-maximal-arcs}
    Given an $s$-$t$ DAG $G$ with $n$ nodes and $m$ arcs, we can compute all and only the maximal safe sequences of arcs of $G$ in $O(m^2)$-time, without constructing dominator trees.
\end{theorem}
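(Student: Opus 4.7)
The plan is to mirror the proof of \Cref{thm:mn-all-only-maximal} but in the arc setting, relying on the characterization of maximal safe sequences of arcs given by \Cref{thm:cores-arcs} together with the degree-based characterization of arc-dominators from \Cref{lemma:arc_dominator_char}. The goal is to identify the relevant arcs in a local manner (without actually building any dominator tree) and then compute their extensions one by one.

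First, I would arc-compress $G$ in $O(n+m)$ time, producing a multigraph $G'$ in which the set of $s$-$t$ path covers corresponds to those of $G$ and in which \Cref{thm:cores-arcs} applies. Next, I would scan every arc $uv$ of $G'$ and decide whether $uv$ is a common leaf of both dominator trees using only local degree information: by \Cref{lemma:arc_dominator_char}, $uv$ is \emph{not} an $s$-dominator iff $d^+_v=0$ or $d^-_v \geq 2$, and analogously it is not a $t$-dominator iff $d^-_u=0$ or $d^+_u \geq 2$. Both tests take constant time per arc, so the set $L$ of arcs of $G'$ that are leaves in both dominator trees is produced in $O(m)$ total time.

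For each arc $uv \in L$, I would then output $\ext{uv}$, defined as the sequence of $s$-$u$ bridges, followed by $uv$, followed by the sequence of $v$-$t$ bridges. The sequences of $s$-$u$ bridges and of $v$-$t$ bridges can each be obtained in $O(m)$ time by a straightforward adaptation of the bridge-listing algorithm of Cairo et al.~\cite{CAIRO2021103} to the arc version (equivalently, by applying it to the line graph restricted to the arcs reachable to/from $uv$). Thus each extension is produced in $O(m)$ time, and since $|L| \leq m$, the total time is $O(m^2)$. Correctness, i.e., that we output every maximal safe sequence exactly once and nothing else, follows from \Cref{thm:cores-arcs}: each extension of an arc in $L$ is maximally safe, each maximal safe sequence is the extension of some common leaf of both dominator trees, and distinct leaves give distinct extensions (since the leaf itself appears in its own extension and different leaves cannot share a maximal safe sequence, again by \Cref{thm:cores-arcs}).

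The main obstacle is being careful about the arc-compression step and about computing extensions on the compressed multigraph: parallel arcs may be introduced by compression, so the bridge computations and the degree tests must be stated for directed multigraphs, and the output must be mapped back to sequences of arcs of $G$ by replacing each compressed arc with the corresponding maximal unitary path of $G$. This mapping preserves safety (as noted after \Cref{lem:compressed-graph}) and only inflates the output by an additive factor already subsumed in $O(m^2)$, so the overall bound is maintained.
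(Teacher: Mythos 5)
Your proposal is correct and follows essentially the same route the paper intends: the paper explicitly states that the proof is ``completely analogous'' to that of \Cref{thm:mn-all-only-maximal}, i.e., arc-compress, identify common leaves of the two arc-dominator trees locally via the degree conditions of \Cref{lemma:arc_dominator_char}, and output each extension in $O(m)$ time, for $O(m^2)$ total. Your degree tests, the duplicate-freeness argument (each leaf appears in its own extension), and the remark about parallel arcs and mapping back through the compression all match the paper's treatment.
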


\begin{theorem}[Representation and optimal enumeration of safe sequences of arcs]
\label{thm:representation-optimal-enumeration-arcs}
    Let $G$ be an $s$-$t$ DAG with $n$ nodes and $m$ arcs. The following hold.
    \begin{enumerate}
        \item There is an $O(m+o)$ time algorithm outputting the set of all maximal safe sequences of arcs with no duplicates, where $o$ denotes the total length of all the maximal safe sequences of arcs.
        
        \item There is a $O(m)$-size representation of all the maximal safe sequences of arcs of $G$ that can be built in $O(m+n)$ time.
    \end{enumerate}
\end{theorem}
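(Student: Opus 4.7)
The plan is to mirror the proof of \Cref{thm:representation-optimal-enumeration} by leveraging the arc-analogues of our main structural results, namely \Cref{thm:cores-arcs} and the arc-version of \Cref{lem:compressed-graph}. First, I would arc-compress $G$ in $O(m+n)$ time to obtain a graph $G'$ on which the characterization of maximal safe sequences of arcs applies. As noted in the preliminaries, unitary-path compression preserves safety: every $s$-$t$ path of $G'$ traverses an arc $e'$ of $G'$ iff the corresponding $s$-$t$ path in $G$ traverses the entire unitary path represented by $e'$, so safe arc sequences in $G$ and in $G'$ are in bijection under this expansion.

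Second, I would build the $s$- and $t$-arc-dominator trees of $G'$ in $O(m+n)$ time. To avoid explicitly materializing the line graph $L(G')$ (whose arc count can be $\Theta(\sum_v d^-_v d^+_v)$ and is therefore not linear in $m+n$), I would instead work with the \emph{subdivision graph} of $G'$: replace each arc $uv$ by a new auxiliary node $a_{uv}$ together with arcs $u \to a_{uv}$ and $a_{uv} \to v$. This subdivision graph has $O(m+n)$ nodes and $O(m+n)$ arcs, and the node-dominator tree computed on it via any linear-time algorithm (e.g., \cite{alstrup1999dominators,buchsbaum2008linear,fraczak2013finding}) directly encodes the arc-dominance relation of $G'$: indeed, $a_{xy}$ node-$s$-dominates $a_{uv}$ precisely when $xy$ arc-$s$-dominates $uv$. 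The arc-dominator tree is obtained by restricting to the $a_{uv}$-nodes, which can be done in linear time by contracting each original-node in the node-dominator tree of the subdivision graph into its unique arc-parent.

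Third, once both arc-dominator trees of $G'$ are available, \Cref{thm:cores-arcs} tells us that an arc $uv$ yields a maximal safe sequence (namely $\ext{uv}$) iff it is a leaf in both trees, and that distinct such leaves yield distinct extensions. For part~1, I iterate over the common leaves and output $\ext{uv}$ for each by following the two root-to-leaf paths (mindful of the arc-compression expansion back to $G$); this runs in time proportional to the total length of all maximal safe sequences plus the linear preprocessing, i.e.\ $O(m+o)$. For part~2, the two dominator trees together with the list of common leaves constitute an $O(m)$-size representation from which any maximal safe sequence can be read off via \Cref{remark:extension-and-dominators} applied to $L(G')$, and the entire construction takes $O(m+n)$ time.

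The main obstacle is the linear-time construction of the arc-dominator trees: a naive line-graph reduction blows up in size, so care is required to use a subdivision-graph reduction (or an equivalent device) that stays in $O(m+n)$. The remaining steps are direct translations of the node-case argument, since arc-compression preserves safety, \Cref{thm:cores-arcs} provides the exact analogue of \Cref{thm:cores-nodes}, and \Cref{thm:safe_sequences_char_arc} guarantees that the extensions of common leaves are exactly the maximal safe sequences, with no duplicates.
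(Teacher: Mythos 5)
Your proposal is correct and follows essentially the same route as the paper, which gives no standalone proof of \Cref{thm:representation-optimal-enumeration-arcs} but declares it ``completely analogous'' to \Cref{thm:representation-optimal-enumeration}: arc-compress $G$, build the $s$- and $t$-dominator trees for the arc-dominance relation, and output $\ext{uv}$ for each arc that is a leaf in both trees, invoking \Cref{thm:cores-arcs} for correctness and non-duplication. The one place where you go beyond the paper is the subdivision-graph device for computing the arc-dominator trees in $O(m+n)$ time without materializing $L(G')$ --- a detail the paper leaves implicit (and in its experiments sidesteps with an $O(m^2)$ procedure) --- and your reduction is sound, modulo the minor point that extracting the arc-dominator tree means taking, for each arc-node, its nearest \emph{arc-node} ancestor (skipping possibly several original nodes), which is still a linear-time traversal.
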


In \Cref{sec:subset-path-covers} we showed how to generalize path covers of all the nodes of the graph to subsets of nodes and ultimately derive a generalization of \Cref{thm:representation-optimal-enumeration}, which materialized in \Cref{thm:representation-optimal-enumeration-C}. The same generalization can easily be done for arcs, analogously to how were able to derive \Cref{thm:representation-optimal-enumeration-arcs} from the results in the main text with minor adaptations.

\section{Omitted proofs}
\label{sec:additional-proofs}

\dominatorscharacterization*

\begin{proof}
    We prove the statement only for $s$-dominators, since the statement for $t$-dominators follows by symmetry.
    
    $(\Leftarrow)$ If the unique in-neighbor of $v$ is $u$, then every $s$-$v$ path contains $u$, and moreover $u$ is always the node immediately preceding $v$ in any of those paths, and thus $u$ must be the immediate dominator of $v$. In fact, for the same reasons, $u$ is the immediate dominator of every node whose unique in-neighbor is $u$.
    
    $(\Rightarrow)$ Suppose $u$ is a dominator, and let $N^+_u = \{v_1,\dots,v_k\}$. If each $v_i$ has an in-neighbor $z_i \notin \{u\} \cup N^+_u$, then as Ochranov{\'a}~\cite{ochranova1983finding} observes, for every node $w$ dominated by $u$ we can construct an $s$-$w$ path passing through some $z_i$ and thus avoiding $u$.
    
    However, in general this property does not hold, because all in-neighbors of some $v_i$ different from $u$ can belong only to $N^+_u$. The remainder of the proof covers this case by showing that nonetheless we can still build $s$-$w$ paths avoiding $u$.
    
    Let $V$ be the set of nodes on a $v_i$-$v_j$ path for $i, j \in \{1,\dots,k\}$ ($i$ possibly equal to $j$), and let $G_V$ be the subgraph induced by $V$. Since $G$ is acyclic, also $G_V$ is acyclic. Let $v_{i_1}, \dots, v_{i_t}$ be the sources of $G_V$. By the assumption, every $v_{i_j}$ has an in-neighbor $z_{i_j}\neq u$. By the definition of $V$, $z_{i_j}$ is not reached by any $v_i \in N^+_u$, since otherwise it would be on a $v_i$-$v_{i_j}$ path (as $z_{i_j}$ reaches $v_{i_j}$), and thus it would belong to $G_V$. This would contradict the fact that $v_{i_j}$ is a source of $G_V$. Thus, no $z_{i_j}$ is reached by $u$. Also by acyclicity, $u$ does not belong to $G_V$.

    Let $w$ be any node that is reachable from $u$. We prove that there is an $s$-$w$ path not using $u$, thus showing that $u$ is not a dominator. Since $u$ reaches $w$, it does so through some out-neighbor, say $v$. Let $v_{i_j}$ be a source of $G_V$ reaching $v$. Consider now an $s$-$z_{i_j}$ path that does not contain $u$. This path exists, since $z_{i_j}$ is not reached by $u$, as proved above. We extend this to $v_{i_j}$, then to $v$ in $G_V$ (without using $u$, since $u$ does not belong to $G_V$), then to $w$. This path does not contain $u$.\hfill $\square$
\end{proof}

\mnallonlymaximal*

\begin{proof}
    First compress $G$ into $G'=(N,A)$ in time $O(m+n)$ (\Cref{lem:compressed-graph}).
    Let $N' \subseteq N$ be the set of nodes $v$ such that there is no node $u \in N$ with $N^+_u = \{v\}$, and such that there is no node $w \in N$ with $N^-_w = \{v\}$.

    Let $v \in N'$. By \Cref{lemma:node_dominator}, we have that $v$ is a leaf in both dominator trees. Since $G'$ is compressed, by \Cref{thm:cores-nodes}, $\ext{v}$ is a maximal safe sequence and every maximal safe sequence is the extension of some node in $N'$. Further, $\ext{v}$, can be computed $O(m)$ time with the algorithm of Cairo et al.~\cite{CAIRO2021103}. Finally, since every node appears exactly once in each dominator tree and extensions from different leaves common to both trees are distinct as they differ in at least the common-leaf, no two nodes in $N'$ identify the same maximal safe sequence via extensions.\hfill $\square$
\end{proof}

The following result is not necessary to get our linear time algorithm (\Cref{thm:representation-optimal-enumeration}). We include it as it complements \Cref{lem:st-dom-immediate-relation} and hence may be of interest.

\begin{lemma}
\label{lem:dom_no_dom}
Let $G=(N,A)$ be an $s$-$t$ directed graph and let $u,v\in N$ be nodes. Suppose that $u$ $s$-dominates $v$ and that $v$ does not $t$-dominate $u$. Then every node $t$-dominated by $v$ is strictly $s$-dominated by $u$.
\end{lemma}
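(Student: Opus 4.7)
The plan is to argue by contradiction, building walks from $s$ to $v$ on one hand and from $w$ to $t$ on the other, and using the two hypotheses to force a contradiction with domination.

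First I would dispose of the ``strictness'' part: we need $u \neq w$. If $u = w$, then every $w$-$t$ path is a $u$-$t$ path, and picking any such path yields a $u$-$t$ path that would have to contain $v$ (since $v$ $t$-dominates $w$), which contradicts the hypothesis that $v$ does not $t$-dominate $u$. Note this also gives us, for free, that $u \neq v$ (since every node $t$-dominates itself, if $u = v$ then $v$ would $t$-dominate $u$).

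Now suppose for contradiction that some $s$-$w$ path $P$ avoids $u$. Since $v$ $t$-dominates $w$, there is at least one $w$-$t$ path, and every such path passes through $v$, so the prefix of such a path up to the first occurrence of $v$ is a $w$-$v$ path $P'$. Concatenating $P$ with $P'$ yields an $s$-$v$ walk; by $s$-dominance of $v$ by $u$ (which passes to walks, since any walk contains a path with the same endpoints and avoiding no node the walk avoids), this walk must go through $u$. Since $P$ avoids $u$, it follows that $u$ lies on $P'$, and hence $P'$ decomposes as a $w$-$u$ subpath $P'_1$ followed by a $u$-$v$ subpath.

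The main observation is now that $P'_1$ avoids $v$: indeed, $v$ appears in $P'$ only as its last node (by construction of $P'$ as a prefix up to the first occurrence of $v$), and $u \neq v$. On the other hand, by hypothesis there exists a $u$-$t$ path $R$ avoiding $v$. Concatenating $P'_1$ with $R$ gives a $w$-$t$ walk avoiding $v$, from which we can extract a $w$-$t$ path also avoiding $v$, contradicting the assumption that $v$ $t$-dominates $w$.

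The only mildly subtle point is the step where $s$-domination is used on the walk $P \cdot P'$: strictly speaking domination is usually stated for paths, but the property transfers to walks because every $s$-$v$ walk contains an $s$-$v$ path whose node set is a subset of the walk's node set. Everything else is bookkeeping of path/walk prefixes and the basic observation $u \neq v$.
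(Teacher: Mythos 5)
Your proof is correct. It uses the same raw ingredients as the paper's proof---an $s$-$w$ path avoiding $u$, a $w$-$v$ path, and a $u$-$t$ path avoiding $v$---but it reaches the contradiction from the other side. The paper first asserts the existence of a $w$-$v$ path avoiding $u$ (justified only by the terse remark that $v$ does not $t$-dominate $u$) and then glues it to the $s$-$w$ path to contradict the $s$-domination of $v$ by $u$. You instead take an \emph{arbitrary} $w$-$v$ path (the prefix of a $w$-$t$ path up to the first occurrence of $v$), use $s$-domination to force $u$ onto it, and then splice in the $u$-$t$ path avoiding $v$ to contradict the $t$-domination of $w$ by $v$. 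Your route has the advantage of making explicit exactly the step the paper compresses: the existence of a $w$-$v$ path avoiding $u$ is precisely what your rerouting argument proves in contrapositive form, so your write-up is the more self-contained of the two. Your handling of walks versus paths and of strictness ($u \neq w$, $u \neq v$) is careful and correct; the only degenerate case is $w = v$, where your path $P'$ is trivial and the step ``$u$ lies on $P'$'' cannot hold, but there the assumed $s$-$w$ path avoiding $u$ already contradicts the $s$-domination of $v$ directly, so the argument still closes.
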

\begin{proof}
    Suppose that there is a node $w$ $t$-dominated by $v$ that is not $s$-dominated by $u$. Pick a $w$-$v$ path $p$ avoiding $u$, which exists since $u$ is not $t$-dominated by $v$. Since $u$ does not $s$-dominate $w$ there is an $s$-$w$ path avoiding $u$, which together with $p$ make a $s$-$v$ path not containing $u$, contradicting the fact that $u$ is an $s$-dominator of $v$.\hfill $\square$
\end{proof}

\section{Applying safe sequences to simplify path-covering ILPs}
\label{sec:encode-safety}

In this section we describe our application of safe sequences to simplify and consequently speed up path-covering ILPs.

\paragraph{Polynomial-size path-covering ILPs.} Dias et al.~\cite{dias2022fast}, and Sashittal et al.~\cite{jumper} showed that path-covering problems admit polynomially-sized ILPs, improving on previous ILP formulations~\cite{cidane,class2,cliiq,isoinfer,isolasso,multitrans,nsmap,ssp,translig} which are potentially exponentially-sized. These work in three parts: (i) $k$ paths are encoded via suitable variables and constraints; (ii) additional constraints (and variables) are added to encode when these paths form a solution; (iii) if necessary, these variables are then used in the formulation of the objective function of the ILP (e.g.~minimization or maximization). Such an ILP can also have a wrapper in which one searches for different numbers $k$ of paths, for each $k$ solves the above-mentioned ILP, and then chooses that value of $k$ optimizing some quantity. In what follows, we review two specific ILPs, and we refer the reader to the original papers~\cite{dias2022fast,jumper} for further details.

For part (i), for each solution path $i$ that needs to be encoded ($i \in \{1,\dots,k\}$), both~\cite{dias2022fast,jumper} introduce a binary variable $x_{uvi}$ associated to each arc $uv$ of the DAG, with the interpretation that $x_{uvi}$ equals 1 if and only if the arc $uv$ belongs to the $i$-th solution path. Since the input graph is a DAG, this can be guaranteed by stating that the sum of all $x_{sui}$ variables on arcs $su$ exiting the source $s$ is exactly 1, and for all nodes $v$ (different from $s$ and $t$) the sum of the binary variables $x_{uvi}$ of arcs $uv$ entering $v$ is equal to the sum of the binary variables $x_{vwi}$ of arcs $vw$ exiting $v$, see~\cite{dias2022fast}.

\paragraph{Fixing ILP variables via safe sequences.} In this paper, we apply safe sequences to fix some $x_{uvi}$ variables to $1$, thus reducing the search space of the ILP solver, using the approach of Grigorjew et al.~\cite{acceleratingILP}. Let $X_1,\dots,X_t$ be sequences in the input DAG such that: (a) each $X_i$, $i \in \{1,\dots,t\}$, must appear in some solution path of the ILP, and (b) there exists no path in the DAG containing two distinct $X_i$ and $X_j$. In the case that every $X_i$ is a path, Grigorjew et al.~\cite{acceleratingILP} noticed that if the two mentioned properties hold, then $X_1,\dots,X_t$ must appear in $t$ distinct paths of among the $k$ solution paths of the ILP. As such, one can, without loss of generality, assign each $X_i$ to the $i$-th solution path. That is, for each $i \in \{1,\dots,t\}$, we set $x_{uvi} = 1$ for each arc $uv$ of $X_i$. 

Finally, if it holds that the ILP solution paths cover all arcs of the graph or all arcs in the set $C$ with respect to which we compute safe sequences, then we can use safe sequences in the above procedure to ensure that the ILP still has the same set of optimal solutions. To find suitable sets of sequences, and to have a large number of binary variables set to 1 in this manner, we follow the approach of~\cite{acceleratingILP}. Namely, for every arc $uv$, we record the longest safe sequence containing $uv$, and use its length as the weight of $uv$. Then, we find a maximum-weight antichain of arcs in the DAG, namely a set of arcs $a_i,\dots,a_t$ such that from no $a_i$ we can reach some other $a_j$, $i,j \in \{1,\dots,t\}$, and its total weight is maximum among all antichains, using the min-flow reduction of~\cite{rival2012graphs}. Then, for each $a_i$ in the maximum-weight antichain, we take the recorded longest safe sequence $X_i$. To see that there can be no path in the DAG containing distinct $X_i$ and $X_j$, notice that such path would contain both the corresponding arcs $a_i$ and $a_j$, which contradicts the fact that $a_i$ and $a_j$ belong to an antichain.

\paragraph{Two path-covering ILPs.} Here we describe two path-covering ILPs that we chose to experiment with. For both, we assume that for every arc $uv$ of the graph we have an associated positive weight $w(uv)$. We also assume part (i) discussed above, namely, the $x_{uvi}$ binary variables, and the constraints discussed above to ensure they induce paths. The overall goal is to find the ``best'' $k$ such paths, and their associated weights $w_1,\dots,w_k$. 

We selected a classical least-squares model (which we call \ILPLQ) that is at the core of several RNA assembly tools e.g.~\cite{isolasso,cidane,ryuto,slide,ireckon,traph}. This minimizes the sum of squared differences between the weight of an arc and the weights of the solution paths passing through the arc. The ILP formulation requires no additional constraints, and the ILP objective function is:\footnote{Note that the terms $x_{uvi}w_i$ are not linear, but can be linearized using a simple technique, see~\cite{dias2022fast,dias2024robust}.}

\begin{equation}
\min \sum_{uv \in A} \left(w(uv) - \sum_{i = 1}^{k}x_{uvi}w_i\right)^2.
\label{eq:lq}    
\end{equation}

While this model is popular, it has quadratic terms in its objective function, which makes it harder to solve. Therefore, as second model (which we call \ILPRobust), we chose one that was recently introduced in~\cite{dias2024robust} and shown to be more accurate than \ILPLQ on the datasets considered therein. In this model, errors are accounted not at the level of individual arcs, but at the level of paths. Namely, for every solution path $P_i$ we have an associated error (or slack) $\rho_i$ which intuitively corresponds to the maximum change in coverage across it. Then, for $k$ paths to be a solution to the ILP, we require that the absolute difference between the weight of an arc and the weights of the solution paths passing through the arc must be below the sum of the errors (or slacks) of the paths passing through the arc. Formally, we add the constraint:\footnote{Note again that the terms $x_{uvi}w_i$ and $x_{uvi}\rho_i$ are not linear, but can be linearized using a simple technique, see~\cite{dias2024robust}.}

\begin{equation}
\left|w(uv) - \sum_{i = 1}^{k}x_{uvi}w_i\right| \leq \sum_{i = 1}^{k}x_{uvi}\rho_i, ~~\forall uv \in A,
\label{eq:robust}
\end{equation}
with objective function:
$\min \sum_{i = 1}^{k}\rho_i$.

\section{Additional experimental results}
\label{sec:additional-experiments}

~~

\begin{table}[h!]
\caption{Experimental results of \ILPRobust. Safety is with respect to path covers of the subset $C$ of arcs whose weight is at least as large as the 25-th percentile of the weights of all arcs. \label{table:rb-25}}
\begin{center}
\begin{tabular}{|r|r|r|r|r|r|r|r|r|}
\hline
& \multirow{2}{*}{$w$} & \multirow{2}{*}{\#g} & \multirow{2}{*}{\shortstack{avg $m$\\(max $m$)}} & \multirow{2}{*}{prep (s)} & \multirow{2}{*}{\shortstack{vars \\ (\%)}} & \multicolumn{2}{c|}{\#solved (avg time (s))} & \multirow{2}{*}{$\times$} \\ \cline{7-8}
&      &        &         &      &   & no safety & safety &  \\ \hline

\multirow{3}{*}{\rotatebox{90}{\shortstack{\textbf{Zebrafish}\\\textbf{}}}}& 1-3 & 15405 & 14 (210) & 0.003 & 83.1 & 15405 (0.012) & 15405 (0.012) & 1.1 \\
& 4-6 & 239 & 35 (156) & 0.006 & 18.1 & 239 (1.968) & 239 (0.374) & 6.1 \\
& 7-9 & 6 & 70 (158) & 0.014 & 9.3 & 4 (72.534) & 6 (23.842) & 33.2 \\
& 10+ & 1 & 81 (81) & - & - & 0 (-) & 0 (-) & - \\

\hline

\multirow{3}{*}{\rotatebox{90}{\shortstack{\textbf{Human}\\\textbf{}}}}& 1-3 & 10729 & 15 (331) & 0.003 & 74.6 & 10729 (0.017) & 10729 (0.013) & 1.3 \\
& 4-6 & 947 & 35 (163) & 0.006 & 16.2 & 944 (3.266) & 947 (0.509) & 9.1 \\
& 7-9 & 92 & 54 (105) & 0.009 & 8.9 & 66 (61.137) & 90 (24.813) & 47.1 \\
& 10+ & 12 & 80 (104) & 0.013 & 7.9 & 1 (83.302) & 5 (14.412) & 21.9 \\

\hline

\multirow{3}{*}{\rotatebox{90}{\shortstack{\textbf{Mouse}\\\textbf{}}}}& 1-3 & 12280 & 14 (128) & 0.003 & 80.0 & 12280 (0.013) & 12280 (0.012) & 1.2 \\
& 4-6 & 749 & 35 (126) & 0.006 & 16.9 & 749 (2.294) & 749 (0.439) & 7.5 \\
& 7-9 & 60 & 58 (231) & 0.011 & 9.5 & 43 (51.997) & 59 (33.739) & 22.1 \\
& 10+ & 9 & 79 (121) & 0.010 & 9.4 & 1 (273.342) & 5 (72.802) & 10.3 \\

\hline

\multirow{3}{*}{\rotatebox{90}{\shortstack{\textbf{SRR020730}\\\textbf{}}}}& 1-3 & 35069 & 9 (186) & 0.002 & 83.2 & 35069 (0.015) & 35069 (0.010) & 1.3 \\
& 4-6 & 4497 & 33 (171) & 0.006 & 15.5 & 4496 (2.041) & 4497 (0.220) & 9.5 \\
& 7-9 & 1008 & 52 (131) & 0.010 & 8.7 & 846 (43.891) & 1008 (5.282) & 67.2 \\
& 10+ & 296 & 75 (184) & 0.015 & 5.9 & 80 (65.021) & 258 (26.940) & 80.6 \\

\hline

\multirow{3}{*}{\rotatebox{90}{\shortstack{\textbf{Mouse}\\\textbf{ONT}}}}& 1-3 & 18527 & 11 (129) & 0.002 & 75.9 & 18527 (0.011) & 18527 (0.010) & 1.2 \\
& 4-6 & 3083 & 31 (387) & 0.006 & 22.4 & 3083 (0.166) & 3083 (0.060) & 3.5 \\
& 7-9 & 762 & 48 (157) & 0.008 & 13.5 & 755 (2.019) & 762 (0.505) & 21.3 \\
& 10+ & 442 & 87 (589) & 0.019 & 8.3 & 374 (16.483) & 437 (0.963) & 145.0 \\

\hline

\multirow{3}{*}{\rotatebox{90}{\shortstack{\textbf{Mouse}\\\textbf{ONT.tr}}}}& 1-3 & 18029 & 12 (131) & 0.002 & 74.7 & 18029 (0.009) & 18029 (0.010) & 1.1 \\
& 4-6 & 3028 & 32 (187) & 0.006 & 21.7 & 3028 (0.154) & 3028 (0.056) & 3.0 \\
& 7-9 & 803 & 50 (419) & 0.009 & 13.5 & 796 (1.999) & 803 (0.351) & 13.1 \\
& 10+ & 476 & 89 (549) & 0.020 & 8.6 & 405 (15.521) & 470 (2.315) & 79.9 \\

\hline

\multirow{3}{*}{\rotatebox{90}{\shortstack{\textbf{Mouse}\\\textbf{PacBio}}}}& 1-3 & 14256 & 14 (382) & 0.003 & 74.4 & 14256 (0.010) & 14256 (0.011) & 1.1 \\
& 4-6 & 1376 & 33 (187) & 0.006 & 22.0 & 1376 (0.203) & 1376 (0.068) & 3.8 \\
& 7-9 & 182 & 49 (159) & 0.010 & 13.9 & 181 (4.180) & 182 (0.524) & 18.8 \\
& 10+ & 63 & 109 (1178) & 0.027 & 9.3 & 53 (22.704) & 62 (3.372) & 76.2 \\

\hline

\end{tabular}
\end{center}
\end{table}

\begin{table}[]
\caption{Experimental results of \ILPLQ. Safety is computed with respect to path covers of the complete set of arcs. \label{table:lq-100}}
\begin{center}
\begin{tabular}{|r|r|r|r|r|r|r|r|r|}
\hline
& \multirow{2}{*}{$w$} & \multirow{2}{*}{\#g} & \multirow{2}{*}{\shortstack{avg $m$\\(max $m$)}} & \multirow{2}{*}{prep (s)} & \multirow{2}{*}{\shortstack{vars \\ (\%)}} & \multicolumn{2}{c|}{\#solved (avg time (s))} & \multirow{2}{*}{$\times$} \\ \cline{7-8}
&      &        &         &      &   & no safety & safety &  \\ \hline

\multirow{3}{*}{\rotatebox{90}{\shortstack{\textbf{Zebrafish}\\\textbf{}}}}& 1-3 & 15405 & 14 (210) & 0.003 & 87.5 & 15405 (0.028) & 15405 (0.016) & 1.5 \\
& 4-6 & 239 & 35 (156) & 0.008 & 29.5 & 225 (15.021) & 239 (0.565) & 80.5 \\
& 7-9 & 6 & 70 (158) & 0.023 & 15.3 & 0 (-) & 6 (5.883) & 92.5 \\
& 10+ & 1 & 81 (81) & 0.030 & 13.1 & 0 (-) & 1 (33.468) & 9.0 \\

\hline

\multirow{3}{*}{\rotatebox{90}{\shortstack{\textbf{Human}\\\textbf{}}}}& 1-3 & 10729 & 15(331) & 0.003 & 78.9 & 10729 (0.046) & 10729 (0.024) & 2.0 \\
& 4-6 & 947 & 35(163) & 0.008 & 24.7 & 861 (16.873) & 947 (0.642) & 63.0 \\
& 7-9 & 92 & 54(105) & 0.012 & 14.4 & 4 (69.780) & 84 (18.858) & 112.2 \\
& 10+ & 12 & 80(104) & 0.017 & 11.1 & 0 (-) & 4 (128.223) & 21.0 \\

\hline

\multirow{3}{*}{\rotatebox{90}{\shortstack{\textbf{Mouse}\\\textbf{}}}}& 1-3 & 12280 & 14 (128) & 0.003 & 84.4 & 12280 (0.029) & 12280 (0.015) & 1.7 \\
& 4-6 & 749 & 35 (126) & 0.007 & 27.4 & 700 (15.287) & 749 (0.691) & 63.7 \\
& 7-9 & 60 & 58 (231) & 0.012 & 16.2 & 0 (-) & 48 (28.615) & 108.8 \\
& 10+ & 9 & 79 (121) & 0.012 & 14.7 & 0 (-) & 4 (37.326) & 19.3 \\

\hline

\multirow{3}{*}{\rotatebox{90}{\shortstack{\textbf{SRR020730}\\\textbf{}}}}& 1-3 & 35069 & 9(186) & 0.002 & 84.7 & 35069 (0.026) & 35069 (0.015) & 1.7 \\
& 4-6 & 4497 & 33(171) & 0.007 & 20.4 & 4429 (13.187) & 4497 (0.449) & 30.0 \\
& 7-9 & 1008 & 52(131) & 0.012 & 11.9 & 265 (125.498) & 998 (9.960) & 182.5 \\
& 10+ & 296 & 75(184) & 0.018 & 8.3 & 1 (7.211) & 202 (42.868) & 88.2 \\

\hline

\multirow{3}{*}{\rotatebox{90}{\shortstack{\textbf{Mouse}\\\textbf{ONT}}}}& 1-3 & 18527 & 11 (129) & 0.002 & 81.4 & 18527 (0.014) & 18527 (0.010) & 1.5 \\
& 4-6 & 3083 & 31 (387) & 0.006 & 30.2 & 3079 (1.401) & 3083 (0.050) & 21.5 \\
& 7-9 & 762 & 48 (157) & 0.010 & 18.1 & 700 (16.120) & 760 (0.228) & 459.0 \\
& 10+ & 442 & 87 (589) & 0.019 & 11.1 & 228 (35.735) & 433 (1.029) & 1182.1 \\

\hline

\multirow{3}{*}{\rotatebox{90}{\shortstack{\textbf{Mouse}\\\textbf{ONT.tr}}}}& 1-3 & 18029 & 12(131) & 0.003 & 81.1 & 18029 (0.019) & 18029 (0.014) & 1.6 \\
& 4-6 & 3028 & 32(187) & 0.007 & 30.6 & 3012 (1.681) & 3028 (0.065) & 21.1 \\
& 7-9 & 803 & 50(419) & 0.012 & 19.0 & 728 (15.169) & 800 (0.371) & 238.9 \\
& 10+ & 476 & 89(549) & 0.022 & 11.6 & 250 (31.191) & 458 (1.346) & 683.3 \\

\hline

\multirow{3}{*}{\rotatebox{90}{\shortstack{\textbf{Mouse}\\\textbf{PacBio}}}}& 1-3 & 14256 & 14(382) & 0.004 & 83.5 & 14256 (0.035) & 14256 (0.018) & 2.2 \\
& 4-6 & 1376 & 33(187) & 0.008 & 33.3 & 1356 (2.734) & 1376 (0.116) & 41.7 \\
& 7-9 & 182 & 49(159) & 0.012 & 20.5 & 144 (35.090) & 182 (1.531) & 489.7 \\
& 10+ & 63 & 109(1178) & 0.017 & 14.2 & 14 (100.395) & 56 (0.640) & 1240.8 \\

\hline

\end{tabular}
\end{center}
\end{table}

\end{document}